\theoremstyle{definition}
\newtheorem{theorem}{Theorem}[section]
\newtheorem{definition}{Definition}[section]
\newtheorem{proposition}{Proposition}[section]
\newtheorem{example}{Example}[section]
\newtheorem{lemma}{Lemma}[section]
\newtheorem{remark}{Remark}[section]
\newtheorem{corollary}{Corollary}[section]
\newcommand{\Z}{\mathbb{Z}}
\newcommand{\C}{\mathbb{C}}
\numberwithin{equation}{section}
\begin{document}
\title{Degeneration limits of Virasoro vertex operators and Painlev\'e $\tau$ functions}
\author{Hajime Nagoya and Haruki Nakagawa \\ School of Mathematics and Physics, Kanazawa University, Kanazawa, Ishikawa 920-1192, Japan \\ E-mail: \href{mailto:nagoya@se.kanazawa-u.ac.jp}{nagoya@se.kanazawa-u.ac.jp}, \href{mailto:nakagawaharuki0@gmail.com}{nakagawaharuki0@gmail.com}}
\date{}
\maketitle
\begin{abstract}
    We construct degeneration limits of vertex operators for the Virasoro algebra. Our method relies on the rearranged expansion of compositions of vertex operators together with their integral representations. Using this framework, we obtain a vertex operator between Verma modules of rank $r+1$ as a degeneration of a composition of two vertex operators between Verma modules of rank $r$ ($r\in\Z_{\geq 0}$). Furthermore, we apply these degeneration limits to prove the conjectural expansions of the $\tau$ functions of the fifth and fourth Painlev\'e equations in terms of irregular conformal blocks \cite{Nagoya 2015}.
\end{abstract}

\tableofcontents
\section{Introduction}

\subsection{Virasoro vertex operators}

In two-dimensional conformal field theories, vertex operators are specific linear maps between Verma modules of the Virasoro algebra. We call the expectation values of such vertex operators conformal blocks, which are essential objects in two-dimensional conformal field theory, initiated by Belavin, Polyakov, and Zamolodchikov \cite{BPZ}, and play an indispensable role in mathematics and theoretical physics\footnote{
Unless otherwise stated, all variables and parameters are complex. Positivity or reality assumptions are imposed only when we choose explicit branches and contours for the free-field integral representations.}.

In 2009, Gaiotto recognized the need to consider conformal field theories with irregular singularities in connection with asymptotically free $N=2$ gauge theories \cite{Gaiotto}. To analyze irregular singularities, an irregular vector $\ket{\Lambda}$ is defined by the following conditions:
\begin{equation}\label{eq condition IV}
    L_n \ket{\Lambda}= \Lambda_n \ket{\Lambda} \quad (r\leq n\leq2 r), \quad L_n\ket{\Lambda}=0\quad (n>2r).
\end{equation}
Here, $L_n$ are the Virasoro generators.  The vector $\ket{\Lambda}$ is said to be of rank $r$ when $\Lambda_{2r} \neq 0$, and of rank $r - 1/2$ when $\Lambda_{2r} = 0$ and $\Lambda_{2r-1} \neq 0$.
 In 2012, Gaiotto and Teschner investigated irregular states and their collision limits in Liouville theory \cite{GT}.
Their rearranged expansion in the rank-one case suggested the degeneration mechanism that is used here for vertex-operator matrix elements. 

In 2015, a mathematically rigorous definition of an irregular vertex operator
\begin{align*}
    \Phi^{\Delta}_{\Lambda', \Lambda}(z): M^{[r]}_\Lambda \to M^{[r]}_{\Lambda'}
\end{align*}
was proposed in \cite{Nagoya 2015} for any positive integer $r$, where $\Delta\in\mathbb{C}$, and $M_{\Lambda}^{[r]}$ is an irregular Verma module of rank $r$ with weight $\Lambda=(\Lambda_{r},\Lambda_{r+1},\ldots,\Lambda_{2r})$. When $\Lambda_{2r}\ne0$, this paper also proved the existence and uniqueness of such operators. In the same work, irregular vertex operators that increase the rank of the irregular Verma module:
\begin{align*}
    \Phi^{[r],\lambda}_{\Lambda',\Delta }(z): M^{[0]}_\Delta \to M^{[r]}_{\Lambda'},
\end{align*}
were also defined. The existence and uniqueness of these irregular vertex operators of rank $r$ were proved in 2018  \cite{Nagoya 2018}.
Moreover, \cite{Nagoya 2018} introduced ramified irregular vertex operators of the Virasoro algebra
\begin{align*}
    \Phi^{\Delta}_{\Lambda, \Lambda'}(z): M^{[r]}_\Lambda \to M^{[r]}_{\Lambda'}\quad ( \Lambda=(\Lambda_{r},\Lambda_{r+1},\ldots,\Lambda_{2r-1})\in\C^{r-1}\times \C^*),
\end{align*}
which describes an irregular singularity of rank $r-1/2$.

In general, irregular conformal blocks are defined in terms of irregular vertex operators and irregular vectors.
 A mathematically rigorous framework for defining irregular conformal blocks was established in \cite{Nagoya 2015}, based on the exact definition and the unique existence of the irregular vertex operators given in Definition 2.10 and Theorem 2.12 of the same paper. Regarding the irregular vectors themselves, in the case of $r=1$, it is easy to see that the condition (1.1) uniquely determines the irregular vector $|\Lambda\rangle$ as an element of the completion of a Verma module. For higher-rank cases, obtaining their explicit expansions has been a subject of recent study. An ansatz expanding irregular vectors as a sum over generalized descendants was first proposed in \cite{GT}. This approach was subsequently generalized to an arbitrary integer rank $r$ in  \cite{Nishinaka Uetoko}, and has been further extended to half-integer ranks, such as $5/2$ and $r-1/2$, in \cite{PP 2023} and \cite{HNNT}, respectively. A recent algebraic construction proving
existence and uniqueness of integer- and half-integer-rank irregular vectors
was given in \cite{Nagoya-new}.

\subsection{Main results}

The first main result of this paper is the construction of degeneration limits for Virasoro vertex operators.
We rewrite the composition of two vertex operators in the form
\begin{equation}\label{eq rank 0 to 1 re ex intro}
  \Phi_{\Delta_5, \Delta_3}^{\Delta_4}(z)\,
  \Phi_{\Delta_3, \Delta_1}^{\Delta_2}(w)\ket{\Delta_1}
  =
  z^{\Delta_5 - \Delta_4 - \Delta_3}\,
  w^{\Delta_3 - \Delta_2 - \Delta_1}
  \left(1-\frac{w}{z}\right)^A
  \sum_{k=0}^{\infty} \ket{R_k(z)}\, w^{k},
\end{equation}
where $\ket{R_k(z)}$ are vectors in the Verma module $M_{\Delta_5}$. Here \(A\) is an auxiliary exponent chosen for the degeneration limit. Its role is to absorb the divergent part of the collision so that the coefficients \(\ket{R_k(z)}\) have finite limits. We refer to this as a rearranged expansion of the composition as in \cite{GT}. We consider the limit $z \to 0$ with suitable parameterization. In \cite{GT}, it was observed that the first terms of $\ket{R_k(z)}$ converge in the limit. Lisovyy, Nagoya, and Roussillon used this type of expansion to compute connection formulas for the $\tau$ function of the fifth Painlev\'e equation in \cite{Lisovyy Nagoya Roussillon}. In this rearranged expansion, the coefficients $\ket{R_k(z)}$ in \eqref{eq rank 0 to 1 re ex intro} satisfy the recursive relations. However, it turns out that even if we assume convergence of $\ket{R_0(z)}, \ldots, \ket{R_{k-1}(z)}$, it is not easy to see whether $\ket{R_k(z)}$ converges or not only from the recursive relations. Hence, recursive relations alone do not provide a convenient framework to prove the existence of degenerate limits of $\ket{R_k(z)}$. We need to calculate $\ket{R_k(z)}$ directly. For this reason, we use integral representations of vertex operators. From these integral representations, we justify the degenerate limits of the vertex operators from rank $0$ to rank $1$.

We first state the degeneration from a composition of regular vertex operators to an irregular vertex operator acting between rank-one modules.
\begin{theorem}\label{main result 1-1}
    Let the parameters be chosen as in Section \ref{subsec rvo degeneration}. Then, after applying the normalized rearranged expansion to the highest-weight vector
and expanding formally in \(w\), the coefficients \(\ket{R_k(z)}\) converge
coefficientwise to the coefficients of
\[
\Phi_{\Lambda',\Lambda}^{\Delta}(w)\ket{\Lambda}.
\]
\end{theorem}
The next theorem deals with the compositions of irregular vertex operators acting between irregular Verma modules of rank $r$. Here, the same difficulty appears as in the regular case: even if the lower coefficients in the rearranged expansion converge, the recursive relations do not directly imply the convergence of the next coefficient. We use the integral representation of the irregular vertex operators to overcome these problems.
\begin{theorem}\label{main result 1-2}
     Let the parameters be chosen as in Section \ref{subsec ivo degeneration}.
 Then, after applying the normalized rearranged expansion for $\Phi_{\Lambda',\tilde{\Lambda}} ^{\Delta_z}(z)\Phi_{\tilde{\Lambda},\Lambda} ^{\Delta_w}(w):M_{\Lambda}^{[r]} \to M_{\Lambda'}^{[r]}$ to the irregular vector
and expanding formally in \(w\), the coefficients \(\ket{R_k(z)}\) converge
coefficientwise to the coefficients of 
\[
\Phi_{\Gamma',\Gamma}^{\Delta_w}(w)\ket{\Gamma},
\]
where $\Phi_{\Gamma',\Gamma}^{\Delta_w}(w)$ is an irregular vertex operator from $M_\Gamma^{[r+1]}$ to $M_{\Gamma'}^{[r+1]}$. 
\end{theorem}
As a corollary of Theorems \ref{main result 1-1} and \ref{main result 1-2}, we obtain degenerations of conformal blocks. We may obtain an irregular vector by a certain limit of the action of an irregular vertex operator on an irregular vector by the method developed in this paper. We will report on this issue in the near future. It would also be desirable to understand whether ramified irregular vertex operators can also be realized by degeneration.
A direct application of the present method is obstructed by the fact that the ramified case involves expansions in half-integer powers, which requires further investigation.

\subsection{Painlev\'e equations and $\tau$ functions}

The second main result of this paper is an application of these degenerations of vertex operators to the Painlev\'e equations.

In \cite{GIL1}, Gamayun, Iorgov, and Lisovyy obtained expansions at $t=0$ of the fifth and third Painlev\'e equations $\mathrm{P_V},\mathrm{P_{{III}_{1,2,3}}}$ by taking a degenerate limit of this combinatorial formula of the $\tau$ function of the sixth Painlev\'e equation \cite{BS, GIL, ILT}.
The expansion is similar to the expansion of the sixth Painlev\'e equation
\begin{equation*}
    \tau_{\mathrm{J}} ^{(0)} (t)
    =\sum_{n\in\mathbb{Z}}s^n C_{\mathrm{J}}(\vec{\theta},\sigma+n)\mathcal{F}_{\mathrm{J}}^{(0)} (\vec{\theta},\sigma+n;t)
    \quad (\mathrm{J}=\mathrm{VI},\mathrm{V},\mathrm{III}_{1,2,3}), 
\end{equation*}
where the conformal block degenerates into the irregular conformal block. For example, in the case of the $\tau$ function of the fifth Painlev\'e equation, the following irregular conformal block appears:
\begin{align*}
    \mathcal{F}_{\mathrm{V}}^{(0)} (\theta,\sigma;t)
    &= \bra{(\theta_*, 1/4)} \cdot \left( \Phi_{\sigma^2 ,\theta_0^2}^{\theta_t^2}(t) \ket{\theta_0^2} \right).
\end{align*}

To describe the behavior of the $\tau$ functions at $t= \infty$, it is natural to consider other irregular conformal blocks that describe irregular singularities. One of the authors proposed a conjecture in 2015 for the expansion of the $\tau$ function of the fifth and fourth Painlev\'e equations at $t= \infty$ in terms of irregular conformal blocks  \cite{Nagoya 2015}. Moreover, by introducing ramified irregular vertex operators, series expansions of the $\tau$ functions of the third and second Painlev\'e equations $\mathrm{P_{III}}$ and $\mathrm{P_{II}}$ in terms of irregular conformal blocks of half rank type were also conjectured \cite{Nagoya 2018}.
Without using irregular vertex operators, one can construct irregular conformal blocks purely as pairings of irregular vectors within Virasoro Verma modules. Based on this approach, for the $\tau$ function of the first Painlev\'e equation, it was conjectured that it admits an expansion in terms of irregular conformal blocks constructed from a rank $5/2$ irregular vector \cite{PP 2023}.

We prove that the $\tau$ functions of the fifth and fourth Painlev\'e equations can be expressed as an expansion of irregular conformal blocks. We set the central charge $c=1$.
\begin{theorem}\label{main result 2-1}
A series expansion of the $\tau$ function of the fifth Painlev\'e equation at $t = \infty$ is given by
        \begin{align*}
            \tau_{\mathrm{V}}^{(\infty)} (t)
            =& \sum_{n \in \mathbb{Z}} e^{2\pi i n  \varrho} (-1)^{ \frac{1}{2}n(n+1)}   C_\mathrm{V}(\vec{\theta},\beta+n) \nonumber\\
            &\times \bra{ ( \theta, 1/4)} \Phi_{(  \theta, 1/4), \left( \theta - \beta -n,1/4 \right) } ^{\theta_t ^2,*} (t) \cdot\ket{\theta_0^2} ,
        \end{align*}
        where $\vec{\theta}$ stands for $(\theta, \theta_t, \theta_0)$, $\varrho,\, \beta \in \mathbb{C}$, and
        \begin{align*}
            C_\mathrm{V}(\vec{\theta},\beta)
            =& \prod_{\epsilon=\pm}G(1 + \epsilon \theta_0 + \theta - \beta )
            G(1+\theta_t +\epsilon \beta ).
        \end{align*}
\end{theorem}

\begin{theorem}\label{main result 2-2}
    A series expansion of the $\tau$ function of the fourth Painlev\'e equation at $t = \infty$ is given by
        \begin{align*}
            \tau_{\mathrm{IV}}^{(\infty)} (t)
            =& \sum_{n \in \mathbb{Z}} e^{2\pi i n  \varrho} C_\mathrm{IV}(\vec{\theta},\beta+n)  \braket{(\theta_*,0,1/4)| \Phi_{(\theta_*,0,1/4),(\theta_* - \beta - n,0,1/4)}^{\theta_t ^2,*}(t)\cdot |0},
        \end{align*}
        where  $\vec{\theta}$ stands for $(\theta_*,\theta_t)$, $\varrho,\, \beta \in \mathbb{C}$, and
        \begin{equation*}
            C_\mathrm{IV}(\vec{\theta},\beta)=G(1 + \theta_{*}  - \beta) \prod_{\epsilon=\pm} G(1+\theta_t +\epsilon \beta ).
        \end{equation*}
\end{theorem}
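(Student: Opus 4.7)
My plan is to deduce Theorem \ref{main result 2-2} from Theorem \ref{main result 2-1} via the classical coalescence $\mathrm{P_V}\to\mathrm{P_{IV}}$, using Theorem \ref{main result 1-2} with $r=1$ to merge the pair of vertex operators on the right-hand side of Theorem \ref{main result 2-1} into a single rank-$2$ irregular vertex operator. First, I would fix the coalescence at the level of the Hamiltonians: introduce a small parameter $\eta$ and rescale the $\mathrm{P_V}$-parameters (in particular sending $\theta_0\to\infty$ with $\theta_{\ast}$ emerging as the limiting combination) and the auxiliary variable $z_2$ so that the rescaled $\mathrm{P_V}$-Hamiltonian converges as $\eta\to 0$ to the $\mathrm{P_{IV}}$-Hamiltonian. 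This yields a precise asymptotic relation $\tau_{\mathrm{V}}^{(\infty)}(s,z_2)\sim (\mathrm{prefactor})\cdot \tau_{\mathrm{IV}}^{(\infty)}(s)$ with an explicit, $n$-independent prefactor.

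Second, I would apply the same scaling termwise to the right-hand side of Theorem \ref{main result 2-1}. The composition $\Phi^{\theta_t^2}_{(\eta\theta,\eta^2/4),(\eta(\theta-\beta-n),\eta^2/4)}(s)\,\Phi^{\theta_0^2}_{\theta_0^2,0}(z_2)$ is a rank-$1$ irregular operator composed with a regular operator. After reinterpreting the regular operator as a rank-$1$ irregular operator with degenerate irregular data and identifying the coalescence scaling with the parameterization of Theorem \ref{main result 1-2}, the theorem gives convergence, up to the prefactor $(-1)^A z^{-\alpha_z+A}e^{-\sum_j\beta^{(z)}_j/z^j}$, to the rank-$2$ irregular vertex operator $\Phi^{\theta_t^2}_{(\theta_{\ast},0,1/4),(\theta_{\ast}-\beta-n,0,1/4)}(s)$ acting on $\ket{0}$, which is exactly the matrix element appearing in Theorem \ref{main result 2-2}.

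Third, I would match the structure constants. The potentially divergent pair $G(1\pm\theta_0+\theta-\beta-n)$ in $C_\mathrm{V}(\vec{\theta},\beta+n)$ is controlled via the Barnes $G$-asymptotic $\log G(1+z)=\tfrac{z^2}{2}\log z-\tfrac{3z^2}{4}+\tfrac{z}{2}\log(2\pi)+O(\log z)$ as $z\to\infty$, combined with a reflection identity to handle the $G(1-\theta_0+\cdots)$ branch. The divergent pieces, together with $\eta^{-2n^2}(-1)^{n(n+1)/2}$, should combine with the prefactors from the $\tau$-function coalescence and from Theorem \ref{main result 1-2} into an overall factor \emph{independent of $n$}, which is absorbed into the normalization of $\tau_{\mathrm{IV}}^{(\infty)}(s)$; the surviving $n$-dependence is then exactly $C_\mathrm{IV}(\theta_{\ast},\theta_t,\beta+n)=G(1+\theta_{\ast}-\beta-n)G(1+\theta_t\pm(\beta+n))$.

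The main obstacle will be justifying the interchange of the $\eta\to 0$ limit with the infinite sum over $n\in\mathbb{Z}$. Theorem \ref{main result 1-2} only asserts termwise convergence of the composition on a fixed vector, so passing to the limit inside the sum requires uniform-in-$n$ estimates on the matrix elements of the irregular vertex operators. The integral representations underlying the proofs of Theorems \ref{main result 1-1} and \ref{main result 1-2} are the natural source of such bounds, and I would expect the technical heart of the argument to be a steepest-descent/saddle-point analysis of these integrals, strong enough to dominate the $\eta^{-2n^2}$ growth so that dominated convergence applies.
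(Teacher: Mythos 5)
Your overall route is the one the paper takes: Theorem \ref{main result 2-2} is obtained from Theorem \ref{main result 2-1} by the coalescence \eqref{lim 5to4}, with the pair $\Phi^{\theta_t^2}_{(\eta\theta,\eta^2/4),(\eta(\theta-\beta-n),\eta^2/4)}(s)\,\Phi^{\theta_0^2}_{\theta_0^2,0}(z_2)$ degenerating, via Corollary \ref{cor limit of ICB 2} (i.e.\ Theorem \ref{Thm IVO to IVO} with $r=1$), to the rank-two operator $\Phi^{\theta_t^2}_{(\theta_*,0,1/4),(\theta_*-\beta-n,0,1/4)}(s)$ acting on $\ket{0}$, while the transformed equation $\tilde{E}_{\mathrm{V}}\to\tilde{E}_{\mathrm{IV}}$ controls the $\tau$-function side. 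So the decomposition, the key lemma, and the parameter identification all match the paper.

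One step of your plan, as stated, would fail. The divergent pieces of $C_{\mathrm{V}}(\vec{\theta},\beta+n)$, together with the prefactors $\epsilon^{A(n)}$, $\eta^{-2n^2}$, $(-1)^{n(n+1)/2}$, do \emph{not} combine into an $n$-independent factor. Using the exact shift identity $G(1+x-n)=(1+x-n)(1+x-n+1)^2\cdots(1+x-2)^{n-1}\Gamma^{-n}(x)G(1+x)$ with $x=-1/\epsilon^2-\beta$ (the paper works with these recurrences rather than with the asymptotics of $\log G$, which keeps the cancellation exact), the divergent part of the ratio of the $n$-th to the $0$-th Fourier coefficient has the form $(\text{divergent constant})^{n}$. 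Such a factor cannot be absorbed into an overall normalization of $\tau_{\mathrm{IV}}^{(\infty)}$; it must be absorbed into the Fourier parameter by redefining $e^{2\pi i\rho'}=e^{2\pi i\rho}\,\Gamma^{-1}(1/\epsilon^2-\beta)\,\epsilon^{-1/\epsilon^2-2\beta+1}e^{-\eta\epsilon}(-1)^{-1/\epsilon^2-2\beta}$, exactly as in the $\mathrm{P_{VI}}\to\mathrm{P_{V}}$ step. Since $\rho$ is a free parameter this is harmless, but it is a different mechanism from the one you describe, and without it your structure-constant matching does not close. Finally, your concern about interchanging $\epsilon\to0$ with the sum over $n$ is legitimate, but the paper does not supply uniform-in-$n$ bounds either: its argument is termwise, so on this point you are demanding more than the paper actually proves.
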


Here, the vector $\bra{(\theta, 1/4)}$ denotes the rank-one irregular vector defined by $(\Lambda_1, \Lambda_2)=(\theta, 1/4)$, and the vector $\bra{(\theta_*,0,1/4)}$ denotes the rank-two irregular vector defined by  $(\Lambda_2,\Lambda_3,\Lambda_4)=(\theta_*,0,1/4)$. Moreover, $G(x)$ is the Barnes $G$ function. The parameters $\theta, \theta_t,\theta_0,\theta_*$ are the parameters of the Painlev\'e equation, and $\varrho$, $\beta$ correspond to the initial conditions.

Theorems \ref{main result 2-1} and \ref{main result 2-2} agree with the conjectures formulated in \cite{Nagoya 2015}.
In particular, the above theorems establish that the $\tau$ functions of the fifth and fourth Painlev\'e equations admit a description at $t = \infty$ in terms of irregular conformal blocks.

This paper is organized as follows. In Section 2, we recall the representation theory of the Virasoro algebra and review regular and irregular vertex operators, along with their free field representations. In Section 3, we study the degenerate limits of vertex operators and prove that such limits exist under suitable conditions. In Section 4, we recall the Hamiltonian systems of the Painlev\'e equations, define the associated $\tau$ functions, and then apply the degenerations of vertex operators constructed in Section 3 to obtain the series expansions of the $\tau$ functions of the fifth and fourth Painlev\'e equations.

\par\medskip
\noindent\textbf{Acknowledgement.} 
 This work was supported by JSPS KAKENHI Grant Number 22K03350 and JST SPRING, Grant Number JPMJSP2135. 

\section{Vertex operators}
\subsection{Regular vertex operators}
The Virasoro algebra is the Lie algebra spanned by $L_n (n \in \mathbb{Z})$ and the central charge $C$, with commutation relations
\begin{align*}
 [L_m, L_n] &= (m-n)L_{m+n} +\frac{C}{12} (m^3-m)\delta_{m+n,0},\\
 [L_m,C] &= 0,
\end{align*}
where $\delta_{i,j}$ stands for Kronecker's delta.

Let $\mathrm{Vir}_{\geq 0}$ be the subalgebra generated by $L_n$ ($n\geq 0$).
A Verma module $M_{\Delta}$ with the highest weight $\Delta\in\C$ is the induced module
\begin{align*}
    M_{\Delta} &= \mathrm{Ind}_{\mathrm{Vir}_{\geq 0}} ^{\mathrm{Vir}} \mathbb{C} \ket{\Delta},
\end{align*}
where $L_n$ ($n\geq 0$) acts on the highest weight vector $\ket{\Delta}$ as
\begin{equation*}
    L_0 \ket{\Delta}= \Delta \ket{\Delta},\quad
    L_n \ket{\Delta}=0\quad (n\geq 1).
\end{equation*}

\begin{definition}\label{def reg vo}
A regular vertex operator $\Phi_{\Delta_3,\Delta_1}^{\Delta_2}(z) : M_{\Delta_1} \to M_{\Delta_3}$ is defined by
    \begin{align}
        [L_n,\Phi_{\Delta_3,\Delta_1}^{\Delta_2}(z)] &= z^n \left( z \dfrac{\partial}{\partial z} + (n+1) \Delta_2 \right) \Phi_{\Delta_3,\Delta_1}^{\Delta_2}(z), \label{eq reg vo}\\
        \Phi_{\Delta_3,\Delta_1}^{\Delta_2}(z) \ket{\Delta_1}
        &= z^{\Delta_3 - \Delta_2 -\Delta_1} \sum_{m \geq 0} \ket{v_m} z^m,\label{eq reg vo action}
    \end{align}
    where $\ket{v_m} \in M_{\Delta_3}$ and $\ket{v_0} = \ket{\Delta_3}$.
\end{definition}
If the Verma module $M_{\Delta_3}$ is irreducible, then a regular vertex operator exists uniquely, and the coefficients $\ket{v_k}$ are determined by the relations
\begin{align}\label{eq regular relation}
    L_{n} \ket{v_k}
    &= \left( \Delta_3 + n \Delta_2 - \Delta_1 + k - n + \delta_{n,0} \Delta_1 \right) \ket{v_{k-n}} \quad (n\geq 0).
\end{align}

A dual Verma module $M_\Delta^{*}$ with the highest weight $\Delta$ is the induced module
    \begin{align*}
        M_{\Delta}^{*} &= \mathrm{Ind}_{\mathrm{Vir}_{\leq 0}} ^{\mathrm{Vir}} \mathbb{C} \bra{\Delta},
    \end{align*}
    where $\mathrm{Vir}_{\leq 0}$ acts on the highest weight vector $\bra{\Delta}$ as
    \begin{equation*}
    \bra{\Delta} L_n=0\quad (n<0),\quad \bra{\Delta} L_0=\Delta\bra{\Delta}.
\end{equation*}
A dual regular vertex operator $\Phi_{\Delta_3,\Delta_1}^{\Delta_2, *}(z)\colon M_{\Delta_3}^* \to M_{\Delta_1}^*$ is defined in a manner similar to Definition \ref{def reg vo} by 
\begin{align*}
        [L_n,\Phi_{\Delta_3,\Delta_1}^{\Delta_2,*}(z)] &= z^n \left( z \dfrac{\partial}{\partial z} + (n+1) \Delta_2 \right) \Phi_{\Delta_3,\Delta_1}^{\Delta_2 ,*}(z), 
        \\
        \bra{\Delta_3}\Phi_{\Delta_3,\Delta_1}^{\Delta_2 ,*}(z)  
        &= z^{\Delta_3 - \Delta_2 -\Delta_1} \sum_{m \geq 0} \bra{v_m} z^{-m},
    \end{align*}
where $\bra{v_m}\in M_{\Delta_1}^*$ and $\bra{v_0}=\bra{\Delta_1}$. 

\begin{definition}
       A pairing $\braket{\ |\cdot|\ } : M_\Delta^{*} \times M_{\Delta} \to \mathbb{C}$ is defined by
    \begin{align*}
        &\bra{\Delta} \cdot \ket{\Delta} = 1, \quad
        \bra{u}L_n \cdot \ket{v} = \bra{u} \cdot L_n \ket{v} = \braket{u|L_n|v},
    \end{align*}
    where $\bra{u}\in M_{\Delta}^*$, $\ket{v}\in M_\Delta$.
\end{definition}

A regular conformal block with $n+2$ points is defined as the expectation value of the regular vertex operators
\begin{equation*}
    \bra{\Delta_{n+1}} \Phi_{\Delta_{n+1},\widetilde{\Delta}_{n-1}}^{\Delta_{n}}(z_n)\circ\cdots\circ  \Phi_{\widetilde{\Delta}_1,\Delta_0}^{\Delta_1}(z_1)\ket{\Delta_0}.
\end{equation*}

\subsection{Irregular vertex operators}
Let $\mathrm{Vir}_{\geq r}$ ($r\in\Z_{\geq 0}$) be the subalgebra generated by $L_n$ ($n\geq r$).
For a weight
\begin{equation*}
 \Lambda=(\Lambda_{r},\Lambda_{r+1},\ldots,\Lambda_{2r})\in\C^{r+1},
\end{equation*}
 an irregular Verma module $M_{\Lambda}^{[r]}$ of rank $r$ is an induced module
\begin{align*}
    M_{\Lambda} ^{[r]} &= \mathrm{Ind}_{\mathrm{Vir}_{\geq r}} ^{\mathrm{Vir}} \mathbb{C} \ket{\Lambda},
\end{align*}
where $\ket{\Lambda}$ is the irregular vector that satisfies
\begin{align*}
    L_n \ket{\Lambda} &= \Lambda_{n} \ket{\Lambda} \quad (n = r, r+1 , \ldots ,2r ), \qquad L_n \ket{\Lambda} = 0 \quad (n>2r).
\end{align*}
The condition $\Lambda_{2r} \neq 0$ implies that the irregular Verma module is irreducible \cite{FJK}.
\begin{definition}\label{def irr vo}
    Let $r$ be a positive integer, then an irregular vertex operator $\Phi_{\Lambda', \Lambda} ^{\Delta}(z) : M_{\Lambda} ^{[r]} \to M_{\Lambda'} ^{[r]} $ is defined by
    \begin{align}
        [L_n, \Phi_{\Lambda', \Lambda} ^{\Delta}(z)]
        &= z^n \left( z \frac{\partial}{\partial z} + (n+1) \Delta \right)  \Phi_{\Lambda', \Lambda} ^{\Delta}(z),
        \nonumber\\
        \Phi_{\Lambda', \Lambda}^{\Delta } (z) \ket{\Lambda}
        &= z^{\alpha} \exp{\left( \sum_{i=1} ^r \frac{\beta_i}{z^i} \right)} \sum_{m=0} ^{\infty} \ket{v_m} z^m,
        \label{eq irreg vo action}
    \end{align}
    where $\ket{v_m} \in M_{\Lambda'} ^{[r]}$ and $\ket{v_0} = \ket{\Lambda'}$.
\end{definition}

The definition implies that, for $n \geq r$,
\begin{align}
    \label{eq recursive relation for IVO}
    L_n \ket{v_m} &=\sum_{i=0}^{r} \delta_{n,i+r}\Lambda_{ i+r} \ket{v_m} - \sum_{i=1} ^r i \beta_i \ket{v_{m+i-n}} + (\alpha + (n+1)\Delta + m-n) \ket{v_{m-n}}.
\end{align}

\begin{theorem}[\cite{Nagoya 2015}]\label{thm Nagoya 2015}
If $\Lambda_{2r}\neq 0$, then an irregular vertex operator exists and is uniquely determined by the parameters $\Lambda$, $\Delta$, and $\beta_r$ with
    \begin{align*}
    \Lambda_n' &= \Lambda_n - \delta_{n,r} r \beta_r \quad (n= r,\ldots, 2r),
\end{align*}
and moreover $\alpha$, $\beta_k$ ($k=1,\ldots,r-1$), $\ket{v_m}$ are polynomials in $\Lambda_r,\ldots,\Lambda_{2r},\Lambda_{2r}^{-1}$, $\beta_r$, $\Delta$.
\end{theorem}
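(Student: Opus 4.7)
The plan is to construct and characterize $\Phi_{\Lambda',\Lambda}^{\Delta}(z)$ by substituting the ansatz \eqref{eq irreg vo action} into the defining commutation relation, extracting the recursive system \eqref{eq recursive relation for IVO} for the coefficients $\ket{v_m}$, and solving it iteratively in $m$. Concretely, one computes $L_n\Phi(z)\ket{\Lambda}$ in two ways, using the commutator $[L_n,\Phi(z)]=z^n(z\partial_z+(n+1)\Delta)\Phi(z)$ together with the eigenvalue equation $L_n\ket{\Lambda}=\Lambda_n\ket{\Lambda}$ (valid for $n\geq r$, with $\Lambda_n=0$ for $n>2r$), and compares coefficients of $z^{m+\alpha}\exp(\sum \beta_i/z^i)$.

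The opening move is to read off $\Lambda'$ from the $m=0$ layer. Since $\ket{v_k}=0$ for $k<0$, the right-hand side collapses: for $r<n\leq 2r$ all sum terms vanish and one obtains $L_n\ket{\Lambda'}=\Lambda_n\ket{\Lambda'}$, forcing $\Lambda'_n=\Lambda_n$; for $n=r$ only the $i=r$ term survives, giving $L_r\ket{\Lambda'}=(\Lambda_r-r\beta_r)\ket{\Lambda'}$, hence $\Lambda'_r=\Lambda_r-r\beta_r$. This matches the claim on $\Lambda'$.

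The core of the argument is an induction on $m$ that simultaneously produces $\ket{v_m}$ and fixes the auxiliary data $\alpha,\beta_1,\ldots,\beta_{r-1}$. Assume $\ket{v_0},\ldots,\ket{v_{m-1}}$ have already been constructed as polynomials in $\Lambda_r,\ldots,\Lambda_{2r},\Lambda_{2r}^{-1},\beta_r,\Delta$, along with $\alpha,\beta_1,\ldots,\beta_{r-1}$. Writing \eqref{eq recursive relation for IVO} at order $m$ for $n=r,r+1,\ldots,2r$ gives $r+1$ linear conditions on the unknown $\ket{v_m}\in M_{\Lambda'}^{[r]}$ whose right-hand sides are known polynomial combinations of the $\ket{v_{m'}}$ with $m'<m$. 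To solve these, I would expand $\ket{v_m}$ in a PBW basis of $M_{\Lambda'}^{[r]}$ adapted to a filtration by the number of generators, compute the action of each $L_n$ ($n\geq r$) via iterated Virasoro brackets, and observe that the hypothesis $\Lambda_{2r}\neq 0$ makes the diagonal of the resulting triangular system a nonzero polynomial in $\Lambda_{2r}$; the system inverts with denominators only in $\Lambda_{2r}$, giving the asserted polynomial dependence. The scalars $\alpha,\beta_1,\ldots,\beta_{r-1}$ enter the right-hand sides as coefficients of $\ket{\Lambda'}$ at specific small orders of $m$, and requiring consistency of the $r+1$ recursions fixes them one at a time.

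The main obstacle is the linear-algebraic step inside the induction: proving that $L_r,\ldots,L_{2r}$ together act nondegenerately on the relevant portion of $M_{\Lambda'}^{[r]}$, with only powers of $\Lambda_{2r}$ appearing as denominators. Concretely, one must choose an order on PBW monomials in which the family $(L_n-\Lambda'_n)_{n=r,\ldots,2r}$ has triangular leading part whose diagonal entries are built from $\Lambda_{2r}$, and verify, using the Virasoro commutation relations, that the $r+1$ simultaneous recursions at each order are mutually compatible, so the system is neither overdetermined nor underdetermined. Once this bookkeeping is in place, existence, uniqueness given $(\Lambda,\Delta,\beta_r)$, and the claimed polynomial dependence on $(\Lambda_r,\ldots,\Lambda_{2r},\Lambda_{2r}^{-1},\beta_r,\Delta)$ all drop out of the induction.
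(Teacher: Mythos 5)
First, note that the paper does not actually prove this statement: it is imported verbatim from \cite{Nagoya 2015} (the rank-$r$ to rank-$r$ case is credited there, with the related existence results in \cite{Nagoya 2018}), so there is no in-paper argument to compare yours against line by line. On its own terms, your proposal gets the architecture right and correctly executes the easy layer: substituting the ansatz \eqref{eq irreg vo action} into the commutation relation yields \eqref{eq recursive relation for IVO}, and at $m=0$ the vanishing of $\ket{v_k}$ for $k<0$ does force $\Lambda'_n=\Lambda_n$ for $r<n\leq 2r$ and $\Lambda'_r=\Lambda_r-r\beta_r$, exactly as claimed.

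However, there is a genuine gap, and it sits precisely where the hypothesis $\Lambda_{2r}\neq 0$ enters. Everything you derive up to that point is true for arbitrary $\Lambda$; the entire content of the theorem is the claim that the family $(L_n-\Lambda'_n)_{n\geq r}$ acting on the PBW basis of $M_{\Lambda'}^{[r]}$ (generated by $L_j$, $j\leq r-1$) gives, at each order $m$, a linear system that is simultaneously solvable and uniquely solvable, with denominators only in $\Lambda_{2r}$. You assert that a suitable monomial order makes this system triangular with diagonal entries built from $\Lambda_{2r}$ (the mechanism being $[L_n,L_j]=(n-j)L_{n+j}$ hitting $L_{2r}\ket{\Lambda'}=\Lambda_{2r}\ket{\Lambda'}$ when $n+j=2r$), and that the $r+1$ recursions for $n=r,\dots,2r$ are mutually compatible and also imply the relations for $n>2r$; but you explicitly defer both the triangularity and the compatibility check, calling them ``the main obstacle'' and ``bookkeeping.'' This is not bookkeeping: injectivity gives uniqueness, but existence requires showing the combined map is onto the space of admissible right-hand sides (the system is a priori overdetermined), and the scalars $\alpha,\beta_1,\dots,\beta_{r-1}$ are additional unknowns entangled with the coefficients of $\ket{v_m}$ whose determination you describe only as ``requiring consistency.'' The rank-one and rank-two formulas in the paper (with denominators $\Lambda_2$ and $\Lambda_4$) confirm that your proposed mechanism is the right one, but without the dimension count and the compatibility argument the proof is a plan rather than a proof. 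To complete it you would need to exhibit the ordering, verify that each diagonal entry is a nonzero multiple of a power of $\Lambda_{2r}$, and show that the relations for $n=r,\dots,2r$ generate (via the Virasoro brackets) all relations for $n>2r$ on the candidate solution.
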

\begin{example}
    In the rank-one case, we have
    \begin{align*}
    \alpha &= - \frac{\beta_1 (\Lambda_1 - \beta_1)}{2 \Lambda_2} - 2 \Delta,
\end{align*}
and in the rank-two case, we have
\begin{align*}
    \alpha &= \frac{\beta_2 ( \Lambda_3 ^2 - 4 \Lambda_4 (\Lambda_2 - 3 \beta_2))}{4 \Lambda_4 ^2} - 3 \Delta,\quad
    \beta_1 = \frac{\beta_2 \Lambda_3}{\Lambda_4}.
\end{align*}
\end{example}

\begin{definition}\label{def irr pair}
   (i) A pairing $\braket{\ |\cdot |\ }: M_{\Delta}^{*} \times M_{\Lambda} ^{[1]} \to \mathbb{C} $ between a dual Verma module $M_{\Delta}^{*}$ and an irregular Verma module $M_{\Lambda} ^{[1]}$ of rank one  is defined by
    \begin{align*}
        &\bra{\Delta} \cdot \ket{\Lambda} = 1, \quad
        \bra{u}L_n \cdot \ket{v} = \bra{u} \cdot L_n \ket{v} = \braket{u|L_n|v}.
    \end{align*}
    where $\bra{u} \in M_{\Delta}^{*}$, $\ket{v} \in M_{\Lambda} ^{[1]}$.

    (ii)
    A pairing $\braket{\ |\cdot|\ }: V_{0}^{*} \times M_{\Lambda} ^{[2]} \to \mathbb{C} $ between the irreducible highest weight module $V_0^*$ and an irregular Verma module $M_{\Lambda} ^{[2]}$ of rank two
    \begin{align*}
        &\bra{0} \cdot \ket{\Lambda} = 1, \quad \bra{u}L_n \cdot \ket{v} = \bra{u} \cdot L_n \ket{v} = \braket{u|L_n|v}
    \end{align*}
    where $\bra{u} \in V_{0}^{*}$, $\ket{v} \in M_{\Lambda} ^{[2]}$.
\end{definition}

An irregular conformal block with $n+1$ regular singular points and one irregular singular point 0 of rank one is defined as the expectation value of the irregular vertex operators
\begin{equation*}
    \bra{\Delta}\cdot \Phi_{\Lambda^{(n)},\Lambda^{(n-1)}}^{\Delta_{n}}(z_n)\circ\cdots\circ \Phi_{\Lambda^{(1)},\Lambda}^{\Delta_1}(z_1)\ket{\Lambda}.
\end{equation*}

We emphasize that, unlike in the regular case, the vertex operator is not assumed to act on the dual module. The pairing in Definition 2.4 is defined directly between a dual Verma module and an irregular Verma module of rank one, and no dual action of the vertex operator is assumed here.

A dual irregular Verma module $M_{\Lambda}^{[r],*}$ of rank $r$ with the weight $\Lambda$ is the induced module
\begin{align*}
    M_{\Lambda} ^{[r],*} &= \mathrm{Ind}_{\mathrm{Vir}_{\leq -r}} ^{\mathrm{Vir}} \mathbb{C} \bra{\Lambda},
\end{align*}
where $\bra{\Lambda}$ is the dual irregular vector that satisfies
\begin{align*}
     \bra{\Lambda}L_{-n} &= \bra{\Lambda}\Lambda_{-n}  \quad (n = r, r+1 , \ldots ,2r ), \qquad  \bra{\Lambda}L_n = 0 \quad (n<-2r).
\end{align*}
A dual irregular vertex operator $\Phi_{\Lambda, \Lambda'} ^{\Delta,*}(z) : M_{\Lambda} ^{[r],*} \to M_{\Lambda'} ^{[r],*} $ is defined in a similar way to Definition \ref{def irr vo}, and pairings between $M_\Lambda^{[1],*} \times M_\Delta$ and $M_\Lambda^{[2],*}\times V_0$ are also defined in a similar way to Definition \ref{def irr pair}.  These dual irregular vertex operators will be used in Section \ref{sec tau functions} to construct the conformal blocks for the $\tau$ function expansion.

\subsection{Free field representation}\label{sec free field representation}

The Heisenberg algebra $\mathrm{H}$ is the Lie algebra spanned by $a_n (n \in \mathbb{Z})$, $q$ and $\mathbf{1}$ with commutation relations
\begin{align*}
    [a_m, a_n] &= m \delta_{m+n,0}\mathbf{1}, \quad
    [a_m, q] = \delta_{m,0}\mathbf{1}, \quad
    [a_m, \mathbf{1}] = 0, \quad
    [q,\mathbf{1}]=0.
\end{align*}
Let $\mathrm{H}_{\geq 0}$ be a subalgebra generated by $a_n$ ($n\geq 0$).
\begin{definition}
    For the vector $\ket{\lambda}$ ($\lambda\in\C$) satisfying the condition
    \begin{equation*}
     a_0 \ket{\lambda} = \lambda \ket{\lambda}, \quad a_n \ket{\lambda}=0\quad
     (n\geq 1),
    \end{equation*}
     a Fock space $F_{\lambda}$ is defined as induced module
    \begin{align*}
        F_{\lambda} &= \mathrm{Ind}_{\mathrm{H}_{\geq 0}} ^{\mathrm{H}} \mathbb{C} \ket{\lambda}.
    \end{align*}
\end{definition}

\begin{definition}
    The normal order is defined as
    \begin{align*}
        :a_m a_n: =
        \begin{cases}
            a_m a_n \; (n \geq m), \\
            a_n a_m \; (n < m),
        \end{cases}
        :q a_n: =
        \begin{cases}
            q a_n \; (n \geq 0), \\
            a_n q \; (n < 0).
        \end{cases}
    \end{align*}
\end{definition}

A free field realization of the Virasoro algebra is given by
\begin{align*}
    L_n = \frac{1}{2} \sum_{k \in \mathbb{Z}} :a_{n-k} a_k: - \rho (n+1) a_n.
\end{align*}
With this realization, the  $L_n$ satisfy the commutation relations of the Virasoro algebra on each Fock space $F_\lambda$ with central charge
$c=1-12\rho^2$. Then for generic $\lambda$, the Fock space $F_\lambda$ is
isomorphic to the Verma module $M_{\Delta_{\lambda}}$, where
\begin{align*}
    \Delta(\lambda) &= \frac{1}{2} \lambda (\lambda - 2 \rho).
\end{align*}

A formal sum $\exp\left(\sum_{k=1}^r \frac{\lambda_{k}}{k}a_{-k}\right)\ket{\lambda_0}$ is an irregular vector of rank $r$ provided that $\lambda_r\neq 0$. The parameters $\Lambda_n$ are given by
\begin{equation}\label{eq Lambda n}
\Lambda_n = \frac{1}{2} \sum_{k=n-r} ^{r} \lambda_{n-k} \lambda_k -\delta_{n,r}(r+1)\rho \lambda_r
\quad (n=r,r+1,\ldots, 2r).
\end{equation}
Since the condition $\lambda_r \neq 0$ implies $\Lambda_{2r} \neq 0$, which ensures that the irregular Verma module $M_\Lambda^{[r]}$ is irreducible \cite{FJK}, the space $F_\lambda^{[r]}$ constructed from this irregular vector is isomorphic to $M_\Lambda^{[r]}$.

The free boson field $\varphi (z)$ is defined by
\begin{align*}
    \varphi(z) &= q + a_0 \log{z} - \sum_{n \neq 0} \frac{a_{n}}{n} z^{-n}.
\end{align*}

Let us recall that an operator
\begin{align*}
    :e^{\lambda_2 \varphi(z)}:
\end{align*}
obtained by exponentiating the free boson field is
a vertex operator $\Phi_{\Delta_3,\Delta_1}^{\Delta_2}(z) : M_{\Delta_1} \to M_{\Delta_3}$ with
\begin{align*}
    \Delta_{1} &= \Delta (\lambda_1), \quad \Delta_2 = \Delta(\lambda_2) ,\quad \Delta_3 = \Delta(\lambda_1 + \lambda_2).
\end{align*}
Furthermore, the operator $:e^{\lambda_z \varphi(z)}:$ is also regarded as an irregular vertex operator $\Phi_{\Lambda', \Lambda} ^{\Delta}(z) : M_{\Lambda} ^{[r]} \to M_{\Lambda'} ^{[r]} $ with the same parametrization of \(\Lambda_n\) as above \eqref{eq Lambda n}. In this case,
\begin{align*}
\Delta=\Delta(\lambda_z),\quad \alpha=\lambda_0\lambda_z, \quad \beta_k=-\frac{\lambda_k\lambda_z}{k} \quad (k=1,\ldots, r).
\end{align*}

To obtain more general concrete vertex operators, we consider
the screening operator $Q_{+}$ defined by
\begin{align*}
    Q_{+} &= \int_\gamma :e^{\lambda_{+} \varphi(t)}: dt,
\end{align*}
where the screening charge $\lambda_+$ is fixed as a root of the equation $\rho = \frac{\lambda_+}{2} - \frac{1}{\lambda_+}$. The superscript $+$ indicates this choice of screening charge.

If we choose an integral path $\gamma$ appropriately, then
the screening operator $Q_{+}$ commutes with the generators $L_n$ of the Virasoro algebra.
In what follows, \(Q_+^n\) is used as shorthand for an \(n\)-fold insertion of the screening current.
Whenever \(Q_+^n\) appears together with a vertex operator acting on a Fock highest vector or irregular vector $v$ considered below, the expression
\begin{equation*}
    :e^{\lambda\varphi(z)}:Q_+^n v
\end{equation*}
is interpreted coefficientwise as the integral obtained from
\begin{equation*}
    \int_\gamma dt_1\cdots dt_n\,
:e^{\lambda\varphi(z)}::e^{\lambda_{+} \varphi(t_1)}:\cdots :e^{\lambda_{+} \varphi(t_n)}:v
\end{equation*}
with the appropriate integration path \(\gamma\); that is, after this vector-valued integral is expanded at \(z=0\), asymptotically in the irregular case, its vector coefficients are written in the Heisenberg monomial basis with scalar coefficients given by convergent integrals.

In the following two propositions, we assume that the variables and parameters are positive real numbers in order to fix branches and contours explicitly. This assumption is not essential for the formal expansions. If $z=|z|e^{\sqrt{-1}\theta}$ with a fixed choice of $\arg z=\theta$, the contours may be chosen by analytic continuation from the positive real case.

For Proposition 2.1, we take
\begin{equation*}
    t_i=zx_i,\quad 0<x_1<\cdots<x_n<1.
\end{equation*}
Thus all variables $t_i$ lie on the segment from $0$ to $z$.

For Proposition 2.2, we keep the change of variables
\begin{equation*}
    t_i=\frac{z}{1-z^r s_i}.
\end{equation*}
If
\begin{equation*}
    s_i=x_i e^{-r\sqrt{-1}\theta},
\quad x_1<\cdots<x_n<0,
\end{equation*}
then
\begin{equation*}
    t_i=\frac{z}{1-z^r s_i}
\end{equation*}
lies on the same ray as $z$. With this choice, the expansion below is still written in powers of $z$.

\begin{proposition}\label{prop free field RVO}
 Let $z$ and $\lambda_k$ ($k=1,2,+)$ be positive real numbers. Let $\gamma=\{(t_1,\dots,t_n)\in\mathbb{R}^n \mid 0<t_1<\cdots<t_n<z\}$ be the integration domain. Then, the expansion of the integral representation $:e^{\lambda_2 \varphi(z)}: Q_{+}^n$
around $z=0$ yields a vertex operator $\Phi_{\Delta_3,\Delta_1}^{\Delta_2}(z) : M_{\Delta_1} \to M_{\Delta_3}$ with
\begin{align*}
    \Delta_{1} &= \Delta (\lambda_1), \quad \Delta_2 = \Delta(\lambda_2) ,\quad \Delta_3 = \Delta(\lambda_1 + \lambda_2 + n \lambda_{+}).
\end{align*}
\end{proposition}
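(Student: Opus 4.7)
The plan is to verify the two defining conditions of Definition \ref{def reg vo} for the operator $:e^{\lambda_2\varphi(z)}:Q_+^n$. The intertwining relation \eqref{eq reg vo} will follow from the commutativity of $Q_+$ with the Virasoro generators, and the expansion \eqref{eq reg vo action} will come from the free-field OPEs combined with the zero-mode charge shift.

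First, a direct contraction using $L_n=\frac{1}{2}\sum_k :a_{n-k}a_k:-\rho(n+1)a_n$ shows that $:e^{\lambda\varphi(z)}:$ is a Virasoro primary of weight $\Delta(\lambda)$; in particular $:e^{\lambda_2\varphi(z)}:$ is primary of weight $\Delta_2$. The defining relation $\rho=\lambda_+/2-1/\lambda_+$ is precisely the equation $\Delta(\lambda_+)=1$, so the screening current is primary of weight one and
\[
[L_n,:e^{\lambda_+\varphi(t)}:]=\frac{\partial}{\partial t}\bigl(t^{n+1}:e^{\lambda_+\varphi(t)}:\bigr).
\]
Choosing $\gamma$ to be a closed twisted cycle on which the multivalued integrand is single-valued kills the boundary contributions, so $[L_n,Q_+]=0$, and hence $[L_n,Q_+^n]=0$ by induction. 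Combining this with the primary-field property of $:e^{\lambda_2\varphi(z)}:$ immediately gives \eqref{eq reg vo} for $:e^{\lambda_2\varphi(z)}:Q_+^n$.

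For the source and target weights, the zero mode $a_0$ acts on $\ket{\lambda_1}\in F_{\lambda_1}\cong M_{\Delta_1}$ by $\lambda_1$, and each factor $:e^{\lambda\varphi}:$ shifts this eigenvalue by $\lambda$, so $:e^{\lambda_2\varphi(z)}:Q_+^n\ket{\lambda_1}\in F_{\lambda_1+\lambda_2+n\lambda_+}\cong M_{\Delta_3}$ with $\Delta_3=\Delta(\lambda_1+\lambda_2+n\lambda_+)$. The leading power of $z$ is extracted by rescaling the screening variables $t_j=zs_j$ in the OPE-generated integrand
\[
z^{\lambda_1\lambda_2}\prod_j t_j^{\lambda_1\lambda_+}\prod_{i<j}(t_i-t_j)^{\lambda_+^2}\prod_j (z-t_j)^{\lambda_2\lambda_+},
\]
which produces the overall factor $z^{\lambda_1\lambda_2+n+n\lambda_1\lambda_++\binom{n}{2}\lambda_+^2+n\lambda_2\lambda_+}$; using $\Delta(\lambda)=\frac{1}{2}\lambda^2-\rho\lambda$ together with the identity $\frac{1}{2}\lambda_+^2-\rho\lambda_+=1$, this exponent simplifies to $\Delta_3-\Delta_2-\Delta_1$, matching \eqref{eq reg vo action}.

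The main obstacle is selecting the cycle $\gamma$ explicitly so that (i) the iterated integral converges, (ii) the multivalued integrand is single-valued on $\gamma$ so the boundary contributions in the step above vanish, and (iii) the leading vector $\ket{v_0}$ is nonzero. A standard admissible choice is a product of Pochhammer-type loops around $0$ and $z$, adapted to the monodromy determined by $\lambda_+,\lambda_1,\lambda_2$. Once such $\gamma$ is fixed, showing $\ket{v_0}\neq 0$ reduces to a Selberg-type integral evaluation, after which a rescaling of the operator enforces the normalization $\ket{v_0}=\ket{\Delta_3}$ required by Definition \ref{def reg vo}.
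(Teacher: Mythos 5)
Your proposal is correct and follows essentially the same route as the paper: the intertwining relation \eqref{eq reg vo} is deduced from the commutativity of $Q_+$ with the Virasoro generators, and the form \eqref{eq reg vo action} is obtained by the rescaling $t_i = zs_i$ in the screening integral. You supply some details the paper leaves implicit (the total-derivative argument for $[L_n,Q_+]=0$, the explicit check that the leading exponent equals $\Delta_3-\Delta_2-\Delta_1$, and the nonvanishing of $\ket{v_0}$ via a Selberg-type evaluation, which the paper only mentions in a remark), but the argument is the same.
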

\begin{proof}
 Because the screening operator $Q_+$ commutes with the Virasoro generators $L_n$, the composition $:e^{\lambda_2 \varphi(z)}: Q_{+} ^n$ satisfies the commutation relations given in \eqref{eq reg vo}. First, we explicitly write the action of the integral representation on the vector $\ket{\lambda_1}$:
    \begin{align*}
    :e^{\lambda_2 \varphi(z)}: Q_{+}^n \ket{\lambda_1}
    =&z^{\lambda_1\lambda_2}\int_\gamma dt\prod_{1 \leq i<j \leq n} (t_i - t_j)^{\lambda_+ ^2}\prod_{i=1}^n\left[
    (z-t_i)^{\lambda_2\lambda_+}t_i^{\lambda_1\lambda_+}
    \exp\left(\lambda_+\sum_{k=1}^\infty\frac{a_{-k}}{k}t_i^k\right)\right]
    \\
    &\times \exp\left(\lambda_2\sum_{k=1}^\infty\frac{a_{-k}}{k}z^{k}\right)
    \ket{\lambda_1+\lambda_2+n\lambda_+},
\end{align*}
    where $dt = dt_1\cdots dt_n$. To expand this expression around $z=0$, we make the change of variables $t_i=zs_i$.
    Then, we have
    \begin{align*}
    :e^{\lambda_2 \varphi(z)}: Q_{+}^n \ket{\lambda_1}=&
        z^{\lambda_1 \lambda_2 + n \lambda_+ (\lambda_1 + \lambda_2) + \frac{n(n-1)}{2} \lambda_+^2 + n}
        \int_{\gamma'} ds\prod_{1 \leq i<j \leq n} (s_i - s_j)^{\lambda_+ ^2}\prod_{i=1}^n
    (1-s_i)^{\lambda_2\lambda_+}s_i^{\lambda_1\lambda_+}
    \\
    &\times \exp\left(\lambda_+\sum_{i=1}^n\sum_{k=1}^\infty\frac{a_{-k}}{k}z^ks_i^k\right)\exp\left(\lambda_2\sum_{k=1}^\infty\frac{a_{-k}}{k}z^{k}\right)
    \ket{\lambda_1+\lambda_2+n\lambda_+},
    \end{align*}
    where $ds= ds_1 \cdots ds_n$. Thus, the expansion above is of the form \eqref{eq reg vo action}, and we obtain the relation

    \begin{equation*}
    \Delta_3
=
\Delta(\lambda_1)+\Delta(\lambda_2)
+\lambda_1\lambda_2
+n\lambda_+(\lambda_1+\lambda_2)
+\frac{n(n-1)}{2}\lambda_+^2+n
=
\Delta(\lambda_1+\lambda_2+n\lambda_+).
\end{equation*}
\end{proof}
We note that the integral formulas that appeared in the proof are the Selberg integrals, whose integration domains are
well known \cite{Aomoto Kita}.

Similarly, the following proposition shows that the asymptotic expansion of $:e^{\lambda_z \varphi(z)}: Q_{+}^n$ yields an irregular vertex operator in the sense of Definition \ref{def irr vo}.
\begin{proposition}\label{prop free field IVO}
 Let $z$ and $\lambda_k$ ($k=0,1,\ldots,r,z,+)$ be positive real numbers such that $0<z<1$. Let $\gamma = \{(t_1,\dots,t_n)\in\mathbb{R}^n \mid 0<t_1<\cdots<t_n<z\}$ be the integration domain. Then, the asymptotic expansion of the integral representation $:e^{\lambda_z \varphi(z)}: Q_{+}^n$ at $z=0$ is identified with the irregular vertex operator $\Phi_{\Lambda', \Lambda} ^{\Delta_2}(z) : M_{\Lambda} ^{[r]} \to M_{\Lambda'} ^{[r]}$, where the parameters $\Lambda_n$ are given by \eqref{eq Lambda n}, $\Lambda'$ is as in Theorem \ref{thm Nagoya 2015},  and the conformal dimension and parameters are specified by:
\begin{align*}
 &\Delta_2=\Delta(\lambda_z),\quad\\
&\alpha=(\lambda_z+n\lambda_+)\lambda_0+n(r+1)(\lambda_+\lambda_z+1) +\frac{n(n-1)(r+1)}{2}\lambda_+^2,
\\
&\beta_k= -\frac{\lambda_k(\lambda_z+n\lambda_+)}{k}\quad (k=1,2,\ldots, r).
\end{align*}
\end{proposition}
\begin{proof}
First, we explicitly write the action of the integral representation on the vector
$\exp{\left(\sum_{k=1}^{r}\frac{\lambda_k}{k} a_{-k} \right)} \ket{\lambda_0}$:
\begin{align*}
  &  :e^{\lambda_z \varphi(z)}: Q_{+} ^n
    \exp{\left(\sum_{k=1}^{r}\frac{\lambda_k}{k} a_{-k} \right)} \ket{\lambda_0}
\\
&= z^{\lambda_z \lambda_0} e^{-\sum_{k=1}^{r}\frac{\lambda_k \lambda_z }{kz^k}}\int_\gamma dt\prod_{1 \leq i<j \leq n} (t_i - t_j)^{\lambda_+ ^2}\prod_{i=1}^n\left[
    (z-t_i)^{\lambda_z\lambda_+}t_i^{\lambda_0\lambda_+}
    \exp\left(-\sum_{k=1}^{r}\frac{\lambda_k \lambda_+ }{kt_i^k}\right)\right]
    \\
    &\times \exp\left(\lambda_+\sum_{i=1}^n\sum_{k=1}^\infty\frac{a_{-k}}{k}t_i^k\right)\exp\left(\lambda_z\sum_{k=1}^\infty\frac{a_{-k}}{k}z^{k}\right)
    \exp{\left(\sum_{k=1}^{r}\frac{\lambda_k}{k} a_{-k} \right)}\ket{\lambda_0+\lambda_z+n\lambda_+}.
\end{align*}
To expand this expression around $z=0$, we make the change of variables $t_i = \frac{z}{1- z^r s_i}$.
    This yields
    \begin{align*}
    &  :e^{\lambda_z \varphi(z)}: Q_{+} ^n
    \exp{\left(\sum_{k=1}^{r}\frac{\lambda_k}{k} a_{-k} \right)} \ket{\lambda_0}
\\
   &=  (-1)^{n\lambda_+ \lambda_z}z^{(\lambda_z + n\lambda_+) \lambda_0 + n(r+1)(\lambda_+ \lambda_z+1) +n(n-1)(r+1)\lambda_+^2/2}
    \exp{\left( \sum_{k=1} ^{r} \frac{-\lambda_k (\lambda_z+n\lambda_+)}{k z^k} \right)}
    \\
    & \times \int_{\gamma'} ds \prod_{1 \leq i<j \leq n} (s_i - s_j)^{\lambda_+ ^2}\prod_{i=1}^n s_i^{\lambda_+ \lambda_z} \left( 1 - z^r s_i \right)^{ (1-n)\lambda_+^2-\lambda_+ (\lambda_0 + \lambda_z) -2}
    \\
    &\times \exp\left( \sum_{i=1}^n\sum_{k=1} ^{r} -\frac{\lambda_k \lambda_+}{k} \left( \left(\frac{1- z^r s_i}{z}  \right)^k  - \frac{1}{z^k}\right)\right)\exp{\left( \lambda_{+} \sum_{i=1}^n\sum_{k=1} ^{\infty} \frac{a_{-k}}{k}\left( \frac{z}{1- z^r s_i} \right)^k \right)}
   \\
   & \times  \exp\left( \lambda_z \sum_{k=1} ^{\infty} \frac{a_{-k}}{k}z^k \right)
    \exp{\left( \sum_{k=1} ^{r} \frac{\lambda_k}{k} a_{-k} \right)} \ket{\lambda_0+ \lambda_z + n\lambda_+},
\end{align*}
where the integration domain $\gamma'$ is equal to $\{(s_1,\ldots,s_n)\in\mathbb{R}^n\mid s_1<\cdots<s_n<0\}$.

Observe that
\begin{equation*}
    \left(\frac{1- z^r s_i}{z}  \right)^k  - \frac{1}{z^k}
    = z^{-k}\sum_{j=1}^k \binom{k}{j}(-z^rs_i)^j
    \quad (k=1,\ldots,r).
\end{equation*}
For $k=r$, the $j=1$ term in the summation yields $-r s_i$. This term contributes to the converging factor $e^{\lambda_r\lambda_+ s_i}$ in the integral. Consequently, we need to evaluate the asymptotic expansion of an integral of the form
\begin{align*}
&\int_{\gamma'} ds \prod_{1 \leq i<j \leq n} (s_i - s_j)^{\lambda_+ ^2}\prod_{i=1}^n s_i^{a} \left( 1 - z^r s_i \right)^{b} e^{\lambda_r\lambda_+ s_i}
    \\
    &\times \prod_{i=1}^n\exp\left( \sum_{k=1} ^{r-1} -\frac{\lambda_k \lambda_+}{k} z^{-k}\sum_{j=1}^k \binom{k}{j}(-z^rs_i)^j-\frac{\lambda_r \lambda_+}{r}z^{-r}\sum_{j=2}^r \binom{r}{j}(-z^rs_i)^j\right),
    \end{align*}
where $a>0,b\in \mathbb{R}$.

Next, we expand the terms in the integrand as
\begin{align*}
    &(1-z^rs_i)^b=\sum_{j=0}^N\binom{b}{j}(-z^rs_i)^j+R_{i,N},
    \\
    &\exp\left(-\frac{\lambda_k \lambda_+}{k} z^{-k}\binom{k}{j}(-z^rs_i)^j\right)=
    \sum_{l=0}^{N_{k,i,j}}\frac{1}{l!}\left(-\frac{\lambda_k \lambda_+}{k} z^{-k}(-z^rs_i)^j\right)^l+R_{k,i,j,N_{k,i,j}},
\end{align*}
for integers $N,N_{k,i,j}\geq 0$. By Taylor's theorem, the remainder satisfies the bound $|R_{k,i,j,N_{k,i,j}}|\leq K_1|z^{-k}(z^rs_i)^j|^{N_{k,i,j}+1}$ for some constant $K_1>0$ on the integration domain $\gamma'$. In order to estimate $R_{i,N}$, we decompose $\gamma'$ as $\bigcup_{\ell=0}^n\gamma'_\ell$, where
\begin{align*}
        \gamma'_\ell=\left\{(s_1,\ldots,s_n)\in\mathbb{R}^n\;  \middle|\; s_1 < \cdots < s_\ell < -\frac{1}{rz^r} \leq s_{\ell+1} < \cdots < s_n < 0 \right\}.
\end{align*}
Using the inequalities $|R_{i,N}|\leq K_2|z^rs_i|^{N+1}$ for $-\frac{1}{rz^r} \leq s_i < 0$, and $|R_{i,N}|\leq K_3 |z|^{-m_1}|s_i|^{m_2}$ for $s_i < -\frac{1}{rz^r}$, with some constants $K_2,K_3>0$ and integers $m_1,m_2>0$,
we obtain
\begin{align*}
    \left|\int_{\gamma'} ds \prod_{1 \leq i<j \leq n} (s_i - s_j)^{\lambda_+ ^2}\prod_{i=1}^n s_i^{a} e^{\lambda_r\lambda_+ s_i}R_{i,N}\right|\leq K_4 |z|^{rN+1}
\end{align*}
for some $K_4>0$. On the region \(s_i<-1/(r|z|^r)\), the remainder is bounded by a polynomial in
\(|s_i|\) and a negative power of \(|z|\). Since \(\lambda_r\lambda_+>0\),
the factor \(e^{\lambda_r\lambda_+s_i}\) gives exponential decay on this tail.
Hence the tail contribution is \(O(e^{-c/|z|^r}|z|^{-M})\), and therefore is
\(O(|z|^{rN+1})\) for any fixed \(N\).

Therefore, the asymptotic expansion of the integral representation
\begin{equation*}
     :e^{\lambda_z \varphi(z)}: Q_{+} ^n
    \exp{\left(\sum_{k=1}^{r}\frac{\lambda_k}{k} a_{-k} \right)} \ket{\lambda_0}
\end{equation*}
takes the form given in \eqref{eq irreg vo action}. Finally, the parameters $\beta_k$ can be directly read off from this integral representation, which completes the proof.

\end{proof}

As established in Theorem \ref{thm Nagoya 2015}, the irregular vertex operator is uniquely determined by $\Lambda$, $\Delta$, and $\beta_r$. Therefore, the parameters $\alpha$ and $\beta_k$ ($k=1,\dots,r-1$) are naturally expressed as functions of $\beta_r$.

\begin{corollary}
    For an irregular vertex operator $\Phi_{\Lambda', \Lambda} ^{\Delta}(z) : M_{\Lambda} ^{[r]} \to M_{\Lambda'} ^{[r]} $ parametrized by $\Lambda_n = \frac{1}{2} \sum_{k=n-r} ^{r} \lambda_{n-k} \lambda_k -\delta_{n,r}(r+1)\rho \lambda_r$ ($n=r,r+1,\ldots, 2r$), we have
    \begin{align}
    &\alpha=\frac{(r+1)}{2}\left(r\frac{\beta_r}{\lambda_r}+\lambda_z\right)
    \left(r\frac{\beta_r}{\lambda_r}-\lambda_z+2\rho\right)
    -r\frac{\lambda_0\beta_r}{\lambda_r},
    \label{eq alpha}
    \\
    &\beta_k=\frac{r\lambda_k}{k\lambda_r}\beta_r\quad (k=1,\ldots,r-1). \label{eq beta}
    \end{align}
\end{corollary}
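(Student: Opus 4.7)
The plan is to apply Proposition~\ref{prop free field IVO} together with the uniqueness part of Theorem~\ref{thm Nagoya 2015}. By uniqueness, $\alpha$ and the $\beta_k$ ($k<r$) are polynomial functions of $\Lambda,\Delta,\beta_r$, so it suffices to verify the claimed formulas on the free-field integral representation $:e^{\lambda_z\varphi(z)}:Q_+^n$ of Proposition~\ref{prop free field IVO}. Given the parametrization of $\Lambda$ in terms of $\lambda_0,\ldots,\lambda_r$, we solve $\Delta(\lambda_z)=\Delta$ for $\lambda_z$ and treat $n$ as a formal parameter determined by $\beta_r=-\lambda_r(\lambda_z+n\lambda_+)/r$; then the free-field operator has the prescribed data, and we may read $\alpha$ and $\beta_k$ off directly.

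Equation~\eqref{eq beta} is immediate. Proposition~\ref{prop free field IVO} gives $\beta_k=-\lambda_k(\lambda_z+n\lambda_+)/k$ for $k=1,\ldots,r$; the $k=r$ relation yields $\lambda_z+n\lambda_+=-r\beta_r/\lambda_r$, and substituting back produces $\beta_k=r\lambda_k\beta_r/(k\lambda_r)$.

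For \eqref{eq alpha}, extract the $z$-prefactor from the asymptotic expansion established in the proof of Proposition~\ref{prop free field IVO}, namely
\begin{equation*}
\alpha \;=\; (\lambda_z+n\lambda_+)\lambda_0 \;+\; n(r+1)(\lambda_z\lambda_++1) \;+\; \binom{n}{2}(r+1)\lambda_+^2.
\end{equation*}
Setting $A:=\lambda_z+n\lambda_+=-r\beta_r/\lambda_r$, expanding the claimed right-hand side of \eqref{eq alpha} and taking the difference reduces the desired identity to
\begin{equation*}
(r+1)\,n\lambda_+\!\left(\frac{1}{\lambda_+}-\frac{\lambda_+}{2}+\rho\right)=0,
\end{equation*}
which holds by the defining relation $\rho=\lambda_+/2-1/\lambda_+$ of $\lambda_+$ in the free-field realization of the Virasoro algebra. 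The one delicate point in the bookkeeping is that the change of variables $t_i=z/(1-z^rs_i)$ has Jacobian $z^{r+1}$ per integration variable (giving a total contribution $z^{n(r+1)}$ rather than $z^{nr}$); this extra $z^n$ is exactly what produces the additive $+1$ inside $n(r+1)(\lambda_z\lambda_++1)$, and without it the final $\rho,\lambda_+$ identity would not close. Once this $z$-power is recorded correctly, the rest is routine algebra.
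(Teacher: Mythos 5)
Your proposal is correct and follows essentially the same route as the paper: invoke the polynomiality of $\alpha,\beta_k$ in $\beta_r$ from Theorem \ref{thm Nagoya 2015}, verify the identities on the countable family of free-field operators $:e^{\lambda_z\varphi(z)}:Q_+^n$ from Proposition \ref{prop free field IVO}, and conclude by polynomial interpolation. The paper states this in two sentences without carrying out the algebra; your explicit reduction of \eqref{eq alpha} to the relation $\rho=\lambda_+/2-1/\lambda_+$, and your accounting of the Jacobian factor $z^{n(r+1)}$, are accurate and simply fill in the computation the paper omits.
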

\begin{proof}
    From Theorem \ref{thm Nagoya 2015}, $\alpha$, $\beta_k$ ($k=1,\ldots,r-1$) are polynomials in $\lambda_0, \lambda_1,\ldots,\lambda_r$, $\lambda_r^{-1}$, $\lambda_z$, $\beta_r$ and $\rho$. From Proposition \ref{prop free field IVO}, the relations \eqref{eq alpha} and \eqref{eq beta} hold for infinitely many values of $\beta_r$.
Hence, they hold identically as polynomial identities in $\beta_r$. This completes the proof.
\end{proof}

\section{Degeneration limit of vertex operators}
In this section, we prove the degeneration limits of the vertex operators stated in Theorem \ref{main result 1-1} and Theorem \ref{main result 1-2} by combining the rearranged expansions with the free field integral representations reviewed in Section \ref{sec free field representation}.

The rearranged expansions in this section are understood as formal expansions in the variable $w$. The coefficients $\ket{R_k(z)}$ are defined algebraically by equating the coefficients of $w^k$. Thus the problem in Theorems \ref{Thm RVO to IVO} and \ref{Thm IVO to IVO} is the coefficientwise convergence of $\ket{R_k(z)}$ under the scaling limits. In the proofs below, we take $z$ and $w$ to be positive real numbers only in order to choose explicit contours for the coefficient integrals. The statements of the theorems do not  require $z$ and $w$ to be real variables.

\subsection{Regular vertex operator}\label{subsec rvo degeneration}

    Consider the composition of two regular vertex operators $\Phi_{\Delta_5, \Delta_3}^{\Delta_4}(z)\Phi_{\Delta_3, \Delta_1}^{\Delta_2}(w): M_{\Delta_1}\to M_{\Delta_5}$. Following Gaiotto and Teschner (see Appendix D.2.1, Eq. (D.6) of \cite{GT}), we use the expansion
\begin{equation}\label{eq rank 0 to 1 re ex}
\Phi_{\Delta_5, \Delta_3}^{\Delta_4}(z)\Phi_{\Delta_3, \Delta_1}^{\Delta_2}(w)\ket{\Delta_1}
=z^{\Delta_5-\Delta_4-\Delta_3}w^{\Delta_3-\Delta_2-\Delta_1}\left(1-\frac{w}{z}\right)^A\sum_{k=0}^{\infty}\ket{R_k(z)}w^k,
\end{equation}
where the factor $\left(1-w/z\right)^A$ is  expanded in powers
of \(w\), namely
\begin{equation*}
    \left(1-\frac{w}{z}\right)^A
    =
    \sum_{n=0}^{\infty}\frac{A(A-1) \cdots (A-n+1)}{n!} \left( -\frac{w}{z}\right)^n .
\end{equation*}
The exponent $A$ is chosen in the degeneration limit below so that the divergent part of the collision is absorbed by the factor $\left(1-w/z\right)^A$ and the coefficients $\ket{R_k(z)}$ admit finite limits. The coefficients $ \ket{R_k(z)}$ are then defined algebraically by equating the coefficients of $w^k$ on both sides.

\begin{theorem}\label{Thm RVO to IVO}
The coefficients \(\ket{R_k(z)}\) converge
coefficientwise to the coefficients of
\[
\Phi_{\Lambda',\Lambda}^{\Delta}(w)\ket{\Lambda}, 
\] 
in the limit
\begin{align}
    \lambda_1 &= -\frac{c_1}{\epsilon} + \frac{c_0}{2}, \qquad
    \lambda_3 = -\frac{c_1}{\epsilon} + \frac{c_0}{2} + \beta, \qquad
    \lambda_4 = \frac{c_1}{\epsilon} + \frac{c_0}{2}, \label{eq limit rvo to ivo 1}\\
    A &= \frac{c_1\beta}{\epsilon} + \frac{1}{2}\beta(c_0-2\rho+\beta)-\Delta, \label{eq limit rvo to ivo 2}\\
    \Delta_2 &= \Delta, \qquad z=\epsilon, \qquad \epsilon\to 0 . \label{eq limit rvo to ivo 3}
\end{align}
 Here $c_0$ and $c_1$ are the parameters of the limiting rank-one
irregular vector in the parametrization given by \eqref{eq Lambda n}. Namely,

\[
(\Lambda_1,\Lambda_2)=\bigl(c_1(c_0-2\rho),\, c_1^2/2\bigr),\qquad
\Lambda_1'=\Lambda_1+c_1\beta,\qquad \Lambda_2'=\Lambda_2,
\]
and $\Delta_i=\Delta(\lambda_i)$ for $i=1,3,4$.
\end{theorem}

\begin{lemma}\label{lem recursive limit rvo to ivo}
The coefficients $\ket{R_k(z)}$ $(k\geq 0)$ of the rearranged expansion \eqref{eq rank 0 to 1 re ex} satisfy the recursive relations
\begin{align*}
    L_n \ket{R_k(z)}
    =&\,
    z^n\left(\Delta_5+n\Delta_4-\Delta_3
    +z\frac{\partial}{\partial z}\right)\ket{R_k(z)}
    +A\sum_{m=1}^{n-1} z^{n-m}\ket{R_{k-m}(z)} \\
    &\,
    +\bigl(A+\Delta_3+n\Delta_2-\Delta_1+k-n\bigr)
    \ket{R_{k-n}(z)}
    \qquad (n\geq 0),
\end{align*}
where $\ket{R_j(z)}=0$ for $j<0$. Assume that $\ket{R_0(z)}\to\ket{\Lambda'}$ and that all $\ket{R_k(z)}$ converge to $\ket{u_k}$ under the limit \eqref{eq limit rvo to ivo 1}--\eqref{eq limit rvo to ivo 3}. Then
\begin{align*}
    L_n \ket{u_k}
    =&\,
    (\delta_{n,1}\Lambda_1+\delta_{n,2}\Lambda_2)\ket{u_k}
    +c_1\beta\,\ket{u_{k-n+1}} \\
    &\,
    +\bigl(\alpha+(n+1)\Delta+k-n\bigr)\ket{u_{k-n}}
    \qquad (n\geq 1),
\end{align*}
where $\ket{u_j}=0$ for $j<0$ and $\alpha$ is the rank-one value in \eqref{eq alpha}. Consequently, if the coefficients $\ket{R_k(z)}$ converge, then their limits coincide with the coefficients of $\Phi_{\Lambda',\Lambda}^{\Delta}(w)$.
\end{lemma}
\begin{remark}
    This Lemma \ref{lem recursive limit rvo to ivo} was discussed in \cite{Lisovyy Nagoya Roussillon}
\end{remark}
\begin{proof}
The first relation follows from the commutation relations \eqref{eq reg vo} and definition of the rearranged expansion \eqref{eq rank 0 to 1 re ex}. Taking the limit \eqref{eq limit rvo to ivo 1}--\eqref{eq limit rvo to ivo 3}, we obtain the second relation. Comparing it with \eqref{eq recursive relation for IVO}, we find that the limiting coefficients satisfy the $r=1$ specialization of the defining recursive relations for the irregular vertex operator $\Phi_{\Lambda',\Lambda}^{\Delta}(w)$, with $\beta_1=-c_1\beta$. Since $\ket{R_0(z)}\to\ket{\Lambda'}$, the assertion follows from the uniqueness of the irregular vertex operator in Theorem~\ref{thm Nagoya 2015}.
\end{proof}

\begin{proof}

First, we analyze the limit of $ \ket{R_0(z)}$. By definition,
\begin{equation*}
     \ket{R_0(z)}
    = z^{\Delta_3+\Delta_4-\Delta_5}\Phi_{\Delta_5,\Delta_3}^{\Delta_4}(z)\ket{\Delta_3}
    = \sum_{k=0}^{\infty}\ket{v_k}z^k
    \qquad (\ket{v_k}\in M_{\Delta_5}).
\end{equation*}
By the recursive relations \eqref{eq regular relation} for $\ket{v_k}$, the vector $ \ket{R_0(z)}$ converges to $\ket{\Lambda'}$ in the limit
\begin{align*}
    \Delta_4-\Delta_3 = \frac{\Lambda_1'}{\epsilon}+O(1),\quad
    2\Delta_4-\Delta_3 = \frac{\Lambda_2'}{\epsilon^2}+O(\epsilon^{-1}),\quad
    z=\epsilon,\quad \epsilon\to 0.
\end{align*}

By Lemma~\ref{lem recursive limit rvo to ivo}, it remains to prove the convergence of the coefficients $\ket{R_k(z)}$ under the limit \eqref{eq limit rvo to ivo 1}--\eqref{eq limit rvo to ivo 3}.
The coefficient $ \ket{R_k(z)}$ is a rational function of $\Delta_3$ and a polynomial in $\Delta_i$ for $i=1,2,4,5$. After substituting \eqref{eq limit rvo to ivo 1}--\eqref{eq limit rvo to ivo 3} into $ \ket{R_k(z)}$, it becomes a Laurent series in $\epsilon$ whose coefficients are polynomials in $\beta$. Hence, it suffices to prove the convergence of $ \ket{R_k(z)}$ for infinitely many values of $\beta$.

We also note that $\Delta_5$ may be chosen arbitrarily. The point is that, although $\Delta_5$ appears in the recursive relations for $ \ket{R_k(z)}$, it does not appear in the expression of $ \ket{R_k(z)}$ in terms of Virasoro generators acting on $ \ket{R_0(z)}$. In other words, the dependence on $\Delta_5$ is carried entirely by $ \ket{R_0(z)}$. Since $ \ket{R_0(z)}$ converges to the irregular vector in the limit of the theorem, the convergence of $ \ket{R_k(z)}$ is unaffected by the choice of $\Delta_5$. Although the free field realization below restricts the out-state to the momentum
\[
\lambda_5=\lambda_1+\lambda_2+\lambda_4+n\lambda_+,
\]
and hence fixes $\Delta_5$, this is only a convenient choice for proving convergence for infinitely many values of $\beta$, and does not affect the general statement of the theorem.

To prove the required convergence, we now use the free field realization. The composition of two regular vertex operators has the form
\begin{align*}
    &:e^{\lambda_4 \varphi (z)}::e^{\lambda_2 \varphi (w)}: Q_{+}^n \ket{\lambda_1} \\
    &=
    z^{\lambda_4 \lambda_1}
    w^{\lambda_2 \lambda_1}
    (z-w)^{\lambda_4 \lambda_2}
    \int_{\gamma} dt
    \prod_{1 \leq i<j \leq n} (t_i - t_j)^{\lambda_+^2}
    \prod_{i=1}^{n}
    t_i^{\lambda_+ \lambda_1}
    (w-t_i)^{\lambda_2 \lambda_+}
    (z-t_i)^{\lambda_4 \lambda_+} \\
    &\qquad \times
    \exp\left(\lambda_+ \sum_{i=1}^n \sum_{k=1}^{\infty} \frac{a_{-k}}{k} t_i^k \right)
    \exp\left(\lambda_2 \sum_{k=1}^{\infty} \frac{a_{-k}}{k} w^k
    +\lambda_4 \sum_{k=1}^{\infty} \frac{a_{-k}}{k} z^k \right)
    \ket{\lambda_1+\lambda_2+\lambda_4+n\lambda_+}.
\end{align*}
 For the purpose of evaluating the coefficient integrals in the free field representation, we take $z$ and $w$ to be positive real numbers with $z>w>0$, and choose
\begin{equation*}
    \gamma=\{(t_1,\ldots,t_n)\mid w>t_1>\cdots>t_n>0\}.
\end{equation*}
We also take the parameters $\lambda_k$ $(k=1,2,+)$ to be positive real numbers in order to fix branches and ensure convergence of the integral. Since $\lambda_3=\lambda_1+\lambda_2+n\lambda_+$, the parameter $\beta$ in \eqref{eq limit rvo to ivo 1} and \eqref{eq limit rvo to ivo 2} is equal to $\lambda_2+n\lambda_+$.

To expand the integral in powers of $w$, we make the change of variables
\begin{align*}
    t_i=\frac{w(1-zs_i)}{1-ws_i}.
\end{align*}
Then the above expression becomes
\begin{align*}
    &:e^{\lambda_4 \varphi (z)}::e^{\lambda_2 \varphi (w)}: Q_{+}^n \ket{\lambda_1} \\
    =&\,
    (-1)^{-\frac{n}{2}(n-1)\lambda_+^2}
    z^{n\lambda_+\lambda_2+\frac{n}{2}(n-1)\lambda_+^2+n}
    C_{0,1}
    \int_{\gamma'} ds
    \prod_{1 \leq i<j \leq n} (s_i-s_j)^{\lambda_+^2} \\
    &\times
    \prod_{i=1}^{n}
    s_i^{\lambda_+\lambda_2}
    (1-ws_i)^{-\lambda_+(\lambda_1+\lambda_4+\lambda_2)-(n-1)\lambda_+^2-2}
    (1-zs_i)^{\lambda_+\lambda_1} \\
    &\times
    \exp\left(\lambda_+ \sum_{i=1}^n \sum_{k=1}^{\infty}
    \frac{a_{-k}}{k}\left(\frac{w(1-zs_i)}{1-ws_i}\right)^k\right)
    \exp\left(\lambda_2 \sum_{k=1}^{\infty}\frac{a_{-k}}{k} w^k\right) \\
    &\times
    \exp\left(\lambda_4 \sum_{k=1}^{\infty}\frac{a_{-k}}{k} z^k\right)
    \ket{\lambda_1+\lambda_2+\lambda_4+n\lambda_+},
\end{align*}
where
\begin{align*}
    C_{0,1}
    =
    z^{(\lambda_1+\lambda_2+n\lambda_+)\lambda_4}
    w^{(\lambda_2+n\lambda_+)\lambda_1+n\lambda_2\lambda_+
    +\frac{n}{2}(n-1)\lambda_+^2+n}
    \left(1-\frac{w}{z}\right)^{(\lambda_2+n\lambda_+)\lambda_4
    +n\lambda_+\lambda_2+\frac{n}{2}(n-1)\lambda_+^2+n},
\end{align*}
and
\[
\gamma'=\{(s_1,\ldots,s_n)\mid 0<s_1<\cdots<s_n<1/z\}.
\]

In this free field representation, we have
\begin{align*}
    \Delta_5-\Delta_4-\Delta_3
    &= (\lambda_1+\lambda_2+n\lambda_+)\lambda_4,\\
    \Delta_3-\Delta_2-\Delta_1
    &= (\lambda_2+n\lambda_+)\lambda_1+n\lambda_2\lambda_+
      +\frac{n}{2}(n-1)\lambda_+^2+n,\\
    A
    &= (\lambda_2+n\lambda_+)\lambda_4+n\lambda_+\lambda_2
      +\frac{n}{2}(n-1)\lambda_+^2+n.
\end{align*}
Expanding the integral in powers of $w$, we obtain
\begin{align*}
    C_{0,1}
    \sum_{k=0}^{\infty}\sum_{|\nu|\leq k}
    r_{\nu}^{(k)} a_{-\nu}
    \exp\left(\lambda_4 \sum_{k=1}^{\infty}\frac{a_{-k}}{k} z^k\right)
    \ket{\lambda_1+\lambda_2+\lambda_4+n\lambda_+}\, w^k.
\end{align*}
Each coefficient $r_{\nu}^{(k)}$ is a linear combination of integrals of the form
\begin{align*}
    \int_{\gamma'} ds\,
    \prod_{i=1}^{n} s_i^{\lambda_+\lambda_2+j_i}
    (1-zs_i)^{\lambda_+\lambda_1+l_i}
    \prod_{1 \leq i<j \leq n}(s_i-s_j)^{\lambda_+^2},
    \qquad
    \left(\sum_{i=1}^n j_i,\sum_{i=1}^n l_i \leq k\right),
\end{align*}
whose coefficients are independent of $\epsilon$. Since $r_{\emptyset}^{(0)}$ is independent of $z$, it follows that
\begin{equation*}
     \ket{R_k(z)}
    =
    \sum_{|\nu|\leq k}
    \frac{r_{\nu}^{(k)}}{r_{\emptyset}^{(0)}} a_{-\nu}
    \exp\left(\lambda_4 \sum_{k=1}^{\infty}\frac{a_{-k}}{k} z^k\right)
    \ket{\lambda_1+\lambda_2+\lambda_4+n\lambda_+}.
\end{equation*}
By \eqref{eq limit rvo to ivo 1} and \eqref{eq limit rvo to ivo 3}, the ratio
$r_{\nu}^{(k)}/r_{\emptyset}^{(0)}$ converges, and
\begin{equation*}
    \exp\left(\lambda_4 \sum_{k=1}^{\infty}\frac{a_{-k}}{k} z^k\right)
    \ket{\lambda_1+\lambda_2+\lambda_4+n\lambda_+}
    \to
    e^{c_1 a_{-1}}\ket{c_0+\lambda_2+n\lambda_+}
    \qquad (\epsilon\to 0).
\end{equation*}
Hence each $ \ket{R_k(z)}$ converges.

Finally, the prefactor degenerates as
\begin{align*}
    (-1)^A z^{-\Delta_5+\Delta_4+\Delta_3+A} C_{0,1}
    &=
    w^{\beta(c_0-2\rho+\beta)-2\Delta(\lambda_2)}
    \left(1-\frac{z}{w}\right)^A \\
    &\to
    w^{\beta(c_0-2\rho+\beta)-2\Delta(\lambda_2)} e^{-c_1\beta/w}.
\end{align*}
This completes the proof.
\end{proof}

The above theorem implies the following corollary.
\begin{corollary}\label{cor RCB ICB}
    The 4-point regular conformal block
    \begin{align*}
        (-1)^A z^{\Delta_3+\Delta_4-\Delta_5+A}\braket{\Delta_5|\Phi_{\Delta_5, \Delta_3} ^{\Delta_4} (z) \Phi_{\Delta_3, \Delta_1} ^{\Delta_2} (t) |\Delta_1}
    \end{align*}
    degenerates to the 3-point irregular conformal block
    \begin{align*}
        \bra{\Delta_5 }\cdot \Phi_{\Lambda', \Lambda}^{\Delta}(t) \ket{\Lambda}
        =&
        t^{\beta(c_0  -2 \rho+\beta) - 2 \Delta} e^{-c_1 \beta/t} \left(1+O(t)\right)
    \end{align*}
    by the limit \eqref{eq limit rvo to ivo 1}--\eqref{eq limit rvo to ivo 3}.
    \end{corollary}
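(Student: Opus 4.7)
The plan is to deduce the corollary directly from Theorem \ref{Thm RVO to IVO} by pairing its operator-level statement with the dual highest-weight vector $\bra{\Delta_5}$. All of the analytic content is already contained in the theorem; the corollary is essentially a matrix-element consequence.

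Concretely, I would start from the rearranged expansion \eqref{eq rank 0 to 1 re ex} with $w=t$ and multiply by $(-1)^A z^{\Delta_3+\Delta_4-\Delta_5+A}$; after a short manipulation the scalar prefactor collapses to $t^{\Delta_3-\Delta_2-\Delta_1}(t-z)^A$. Substituting the parameterization \eqref{eq limit rvo to ivo 1}--\eqref{eq limit rvo to ivo 3} and using $A = c_1\beta/\epsilon + O(1)$, the divergent $c_1\beta/\epsilon$ piece of $A$ exactly cancels the matching $-c_1\beta/\epsilon$ piece hidden in $\Delta_3-\Delta_1$, and $(1-\epsilon/t)^A \to e^{-c_1\beta/t}$. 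This is essentially the same computation already carried out for $C_{0,1}$ at the end of the proof of Theorem \ref{Thm RVO to IVO}, and yields the prefactor limit $t^{\beta(c_0-2\rho+\beta)-2\Delta}\,e^{-c_1\beta/t}$. Pairing with $\bra{\Delta_5}$ and invoking Theorem \ref{Thm RVO to IVO} term by term then converts $\sum_{k\geq 0}\ket{R_k(z)}\,t^k$ into the scalar series $\sum_{k\geq 0}(\bra{\Delta_5}\cdot\ket{v_k})\,t^k$; the $k=0$ coefficient equals $\bra{\Delta_5}\cdot\ket{\Lambda'}=1$ by the normalization of Definition \ref{def irr pair}, while the higher $k$ terms furnish the $O(t)$ error. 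Reading \eqref{eq irreg vo action} with $\alpha=\beta(c_0-2\rho+\beta)-2\Delta$ and $\beta_1=-c_1\beta$ identifies the result as $\bra{\Delta_5}\cdot\Phi_{\Lambda',\Lambda}^{\Delta}(t)\ket{\Lambda}$, the claimed $3$-point irregular conformal block.

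The only point requiring even a little care is the interchange of the $\epsilon\to 0$ limit with the pairing against $\bra{\Delta_5}$, and I do not expect this to be a serious obstacle. In the free-field realization used in the proof of Theorem \ref{Thm RVO to IVO}, each $\ket{R_k(z)}$ is a finite linear combination of monomials in the $a_{-n}$ acting on a Fock highest-weight vector, with explicit scalar coefficients $r_\nu^{(k)}/r_\emptyset^{(0)}$ coming from Selberg-type integrals; the highest-weight component (the only one seen by $\bra{\Delta_5}$) converges along the prescribed limit precisely because these scalars do, so no uniformity or interchange subtleties arise beyond those already resolved inside the theorem itself.
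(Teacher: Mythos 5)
Your argument is correct and follows essentially the same route as the paper: the paper's own proof is a two-line appeal to Theorem \ref{Thm RVO to IVO} together with the normalization $\bra{\Delta_5}\cdot\ket{R_0}=1$ and the convergence $\ket{R_0}\to\ket{\Lambda'}$, and your prefactor cancellation $\Delta_3-\Delta_2-\Delta_1+A\to\beta(c_0-2\rho+\beta)-2\Delta$ with $(1-z/t)^A\to e^{-c_1\beta/t}$ is exactly the $C_{0,1}$ computation already carried out at the end of that theorem's proof. You merely supply the details the paper leaves implicit (and correctly read the exponential factor as $e^{-c_1\beta/t}$ rather than the $e^{-c_1\beta/z}$ appearing in the corollary's display, which is evidently a typo for the variable of the irregular vertex operator).
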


\begin{proof}
Since \begin{align*}
         \bra{\Delta_5}\cdot \left(z^{\Delta_3+\Delta_4-\Delta_5}\Phi_{\Delta_5, \Delta_3}^{\Delta_4}(z)\ket{\Delta_3}\right)=1,    \quad
         z^{\Delta_3+\Delta_4-\Delta_5}\Phi_{\Delta_5, \Delta_3}^{\Delta_4}(z)\ket{\Delta_3}\to \ket{\Lambda'},
         \end{align*}
we obtain the desired result by the above theorem.
\end{proof}

Similarly, we obtain the following dual version. 
We use the limit obtained from \eqref{eq limit rvo to ivo 1} and 
\eqref{eq limit rvo to ivo 2} by replacing
    $(\lambda_1,\lambda_3,\lambda_4,\beta,\Delta)$ with $(2\rho-\lambda_5,\,2\rho-\lambda_3,\,\lambda_2,\,-\beta,\,\Delta_4)$, and by replacing $z=\epsilon$ with $z=1/\epsilon$. Explicitly,
\begin{align}
    \lambda_2
    &=
    \frac{c_1}{\epsilon}
    +\frac{c_0}{2},
    \quad
    \lambda_5
    =
    \frac{c_1}{\epsilon}
    -\frac{c_0}{2}
    +2\rho,
    \quad
    \lambda_3
    =
    \frac{c_1}{\epsilon}
    -\frac{c_0}{2}
    +2\rho+\beta,
    \label{eq dual limit rvo to ivo 1}
    \\
    A
    &=
    -\frac{c_1\beta}{\epsilon}
    -\frac{1}{2}\beta(c_0-2\rho-\beta)
    -\Delta_4,
    \label{eq dual limit rvo to ivo 2}
    \\
    z
    &=
    \frac1\epsilon,
    \quad
    \epsilon\to0.
    \label{eq dual limit rvo to ivo 3}
\end{align}
Here $\Delta_i=\Delta(\lambda_i)$ for $i=2,3,5$, and $c_0$ and $c_1$ are the parameters of the limiting rank-one irregular vector in the parametrization given by \eqref{eq Lambda n}.

\begin{corollary}\label{cor dual RCB ICB}
The 4-point dual regular conformal block
\begin{align*}
    (-1)^{-A}z^{-\Delta_3+\Delta_2+\Delta_1-A}
    \braket{
    \Delta_5|
    \Phi_{\Delta_5,\Delta_3}^{\Delta_4,*}(t)
    \Phi_{\Delta_3,\Delta_1}^{\Delta_2,*}(z)
    |\Delta_1}
\end{align*}
degenerates to the 3-point dual irregular conformal block
\begin{align*}
    \braket{
    \Lambda|
    \Phi_{\Lambda,\Lambda'}^{\Delta_4,*}(t)
    |\Delta_1}
\end{align*}
by the limit \eqref{eq dual limit rvo to ivo 1}--\eqref{eq dual limit rvo to ivo 3}.
Here \(\Lambda\) is parametrized by \eqref{eq Lambda n} with \(r=1\) and parameters \(c_0,c_1\), and \(\Lambda'\) is given by
\begin{equation*}
    \Lambda_1'=\Lambda_1-c_1\beta,
    \qquad
    \Lambda_2'=\Lambda_2.
\end{equation*}
\end{corollary}

\subsection{Irregular vertex operator} \label{subsec ivo degeneration}

Consider the composition of two irregular vertex operators
\[
\Phi_{\Lambda',\tilde{\Lambda}}^{\Delta_z}(z)\Phi_{\tilde{\Lambda},\Lambda}^{\Delta_w}(w)
: M_{\Lambda}^{[r]} \to M_{\Lambda'}^{[r]}.
\]
We introduce the following rearranged expansion:
\begin{equation}\label{eq rearranged expansion of IVO}
    \Phi_{\Lambda',\tilde{\Lambda}}^{\Delta_z}(z)\Phi_{\tilde{\Lambda},\Lambda}^{\Delta_w}(w)\ket{\Lambda}
    =
    z^{\alpha_z} w^{\alpha_w}
    \exp\left(\sum_{j=1}^{r}\left(\frac{\beta_j^{(z)}}{z^j}+\frac{\beta_j^{(w)}}{w^j}\right)\right)
    \left(1-\frac{w}{z}\right)^A
    \sum_{k=0}^{\infty} \ket{R_k(z)}\, w^k,
\end{equation}
where $ \ket{R_k(z)}\in M_{\Lambda'}^{[r]}$, $\Lambda_n$ is parameterized as in Proposition~2.2, and
\begin{equation}\label{eq IVO composition parameterization}
    \beta_r^{(w)}=-\frac{\lambda_r\beta}{r},\qquad
    \beta_r^{(z)}=-\frac{\lambda_r\lambda_z}{r},\qquad
    \Delta_z=\Delta(\lambda_z),\qquad
    \Delta_w=\Delta(\lambda_w).
\end{equation}
The remaining parameters are given by \eqref{eq alpha} and \eqref{eq beta}.

The precise meaning of \eqref{eq rearranged expansion of IVO} is as follows.
After the factor
\begin{equation*}
    z^{\alpha_z} w^{\alpha_w}
    \exp\left(\sum_{j=1}^{r}\left(\frac{\beta_j^{(z)}}{z^j}+\frac{\beta_j^{(w)}}{w^j}\right)\right),
\end{equation*}
 is removed, the remaining equality is understood as an identity in $M_{\Lambda'}^{[r]}((z))[[w]]$.
The factor $\left(1-w/z\right)^A$ is expanded by the formal binomial series above. The coefficients $\ket{R_k(z)}\in M_{\Lambda'}^{[r]}((z))$ are then uniquely defined by equating the coefficients of \(w^k\). Thus, \eqref{eq rearranged expansion of IVO} is the irregular analogue of the rearranged expansion \eqref{eq rank 0 to 1 re ex}.

Since the parameter $\beta_r^{(z)}$ depends on $\lambda_r$ and $\lambda_z$, the vertex operator $\Phi_{\Lambda',\tilde{\Lambda}}^{\Delta_z}(z)$ is special. In fact, in what follows we take it to be the free field vertex operator $:e^{\lambda_z\varphi(z)}:$. The reason is that, at present, we do not know how to take the limit of $\Phi_{\Lambda',\Lambda}^{\Delta}(z)\ket{\Lambda}$ as $z\to 0$ in full generality, whereas this limit can be computed explicitly in the free field realization.

Indeed, we have
\begin{align*}
    :e^{\lambda_z \varphi(z)}:
    e^{\sum_{k=1}^{r}\frac{\lambda_k}{k}a_{-k}}
    \ket{\lambda_0}
    &=
    z^{\lambda_z\lambda_0}
    e^{-\lambda_z\sum_{k=1}^{r}\frac{\lambda_k}{k z^k}}
    e^{\lambda_z\sum_{k=r+2}^{\infty}\frac{a_{-k}}{k}z^k}
    e^{\sum_{k=1}^{r+1}\frac{\lambda_k+\lambda_z z^k}{k}a_{-k}}
    \ket{\lambda_0+\lambda_z},
\end{align*}
where $\lambda_{r+1}=0$. If we set
\begin{equation}\label{eq limit irregular vector}
    \lambda_k+\lambda_z z^k=c_k \quad (k=1,2,\ldots,r+1),\qquad
    \lambda_0+\lambda_z=c_0,
\end{equation}
then, in the limit $z\to 0$, we obtain the irregular vector
\begin{equation}\label{eq free field parameter irregular r+1}
    \exp\left(\sum_{k=1}^{r+1}\frac{c_k}{k}a_{-k}\right)\ket{c_0}.
\end{equation}
Here $c_0,\ldots,c_{r+1}$ are the parameters of the limiting rank
$r+1$ irregular vector in the parametrization given by \eqref{eq Lambda n}.
Furthermore, by using $:e^{\lambda_z\varphi(z)}:$, we obtain a general irregular vertex operator between irregular Verma modules of rank $r+1$ as a limit.

\begin{example}
When $r=1$, the coefficient $ \ket{R_1(z)}$ is given by
\begin{align*}
    \ket{R_1(z)}
    =&\, \frac{1}{z}\Bigl(A + (\Lambda_1-\beta_1^{(w)}) d_1 + z(d_{\emptyset}+d_1\Delta_z)\Bigr) \ket{R_0(z)}
    - \frac{d_1}{z}L_1 \ket{R_0(z)}
    + d_1 L_0 \ket{R_0(z)},
\end{align*}
where 
\begin{equation*}
    d_1 = - \frac{\beta_1^{(w)}}{2\Lambda_2}, \qquad
    d_{\emptyset}
    = - \frac{-4 \Delta_w (\Lambda_1 - \beta_1^{(w)}) \Lambda_2 - (\Lambda_1-\beta_1^{(w)})^2 \beta_1^{(w)} + (\Lambda_1-\beta_1^{(w)})(\beta_1^{(w)})^2}{8 \Lambda_2^2}.
\end{equation*}
In view of \eqref{eq limit irregular vector}, this expression admits a finite limit provided that $A$ is chosen appropriately.
\end{example}

\begin{theorem}\label{Thm IVO to IVO}
The coefficients $\ket{R_k(z)}$ converge coefficientwise to the coefficients of 
\[\Phi_{\Gamma',\Gamma}^{\Delta_w}(w)\ket{\Gamma},\]
in the limit
\begin{align}
    \lambda_z &= \frac{c_{r+1}}{\epsilon^{r+1}} + \frac{c_r}{\epsilon^r} + \cdots + \frac{c_1}{\epsilon} + \frac{c_0}{2}, \label{eq limit r 1} \\
    \lambda_j &= -\frac{c_{r+1}}{\epsilon^{r-j+1}} - \frac{c_r}{\epsilon^{r-j}} - \cdots - \frac{c_{j+1}}{\epsilon} + \frac{c_0}{2}\delta_{j,0}
    \qquad (j=0,1,\ldots,r), \label{eq limit r 2}\\
    A &= \frac{c_{r+1}\beta}{\epsilon^{r+1}} + \cdots + \frac{c_1\beta}{\epsilon}
    + \frac{\beta}{2}(c_0-2\rho+\beta)-\Delta_w,
    \qquad z=\epsilon,\qquad \epsilon\to 0, \label{eq limit r 3}
\end{align}
where
\begin{equation}\label{eq Gamma n}
    \Gamma_n = \frac{1}{2}\sum_{k=n-r-1}^{r+1} c_{n-k}c_k
    - \delta_{n,r+1}(r+2)\rho c_{r+1}
    \quad (n=r+1,r+2,\ldots,2r+2),
\end{equation}
and 
\begin{equation*}
    \Gamma'_{r+1} = \Gamma_{r+1}+c_{r+1}\beta, \quad \Gamma_{n+r} ' = \Gamma_{n+r} \quad (n =2, \ldots r+2).
\end{equation*}
\end{theorem}

\begin{lemma}\label{lem recursive limit ivo to ivo}
The coefficients $ \ket{R_k(z)}$ $(k\geq 0)$ of the rearranged expansion satisfy the recursive relations
\begin{align*}
    L_n  \ket{R_k(z)}
    =&\,
    \left(
    \delta_{n,r}\Lambda_r + \cdots + \delta_{n,2r}\Lambda_{2r}
    + z^n\left(\alpha_z - \sum_{i=1}^{r}\frac{i\beta_i^{(z)}}{z^i} + (n+1)\Delta_z + z\partial_z\right)
    \right) \ket{R_k(z)} \\
    &+
    \sum_{i=1}^{n-1} A z^i\ket{R_{k-n+i}(z)} - \sum_{i=1}^{r} i\beta_i^{(w)}\ket{R_{k-n+i}(z)}
    \\
    &+
    \left(
    A+\alpha_w+(n+1)\Delta_w+k-n
    \right)\ket{R_{k-n}(z)}
    \qquad (n\geq r).
\end{align*}
Assume that all $ \ket{R_k(z)}$ converge to $\ket{u_k}$. Then, by \eqref{eq alpha} and \eqref{eq beta}, these relations converge, under the limit of Theorem \ref{Thm IVO to IVO}, to
\begin{align*}
    L_n \ket{u_k}
    =&\,
    \sum_{i=1}^{r+2}\delta_{n,r+i}\Gamma_{r+i}\ket{u_k}
    + \beta\sum_{i=1}^{r+1} c_i \ket{u_{k-n+i}} \\
    &+
    \bigl(\alpha+(n+1)\Delta_w+k-n\bigr)\ket{u_{k-n}}
    \qquad (n\geq r+1),
\end{align*}
where $\alpha$ is given by the rank $r+1$ version of \eqref{eq alpha}. These are precisely the defining relations \eqref{eq recursive relation for IVO} for the rank $r+1$ irregular vertex operator. Consequently, if the coefficients $ \ket{R_k(z)}$ converge, then their limits coincide with the coefficients of $\Phi_{\Gamma',\Gamma}^{\Delta_w}(w)$.
\end{lemma}
\begin{proof}
This is immediate from \eqref{eq alpha} and \eqref{eq beta}.
\end{proof}

\begin{proof}[Proof of Theorem \ref{Thm IVO to IVO}]
By Lemma \ref{lem recursive limit ivo to ivo}, it remains to prove the convergence of the coefficients $ \ket{R_k(z)}$ under the scaling limit \eqref{eq limit r 1}--\eqref{eq limit r 3}. As in the proof of Theorem \ref{Thm RVO to IVO}, it is sufficient to establish this for infinitely many irregular vertex operators arising from the free field realization.

\medskip
\noindent\textbf{Step 1. Free field representation.}
We first consider the free field realization of the composition:
\begin{align*}
    &:e^{\lambda_z \varphi(z)}::e^{\lambda_w \varphi(w)}:
    Q_{+}^n \exp\left( \sum_{k=1}^{r} \frac{\lambda_k}{k} a_{-k} \right) \ket{\lambda_0} \\
    =&\,
    z^{\lambda_z \lambda_0} w^{\lambda_w \lambda_0} (z-w)^{\lambda_z \lambda_w}
    \exp\left( \sum_{k=1}^{r} \frac{-\lambda_k \lambda_z}{k z^k} \right)
    \exp\left( \sum_{k=1}^{r} \frac{-\lambda_k \lambda_w}{k w^k} \right) \\
    &\times \int_{\Delta} dt\,
    \prod_{1 \leq i<j \leq n} (t_i - t_j)^{\lambda_+ ^2}
    \prod_{i=1}^{n}\left[ t_i^{\lambda_+ \lambda_0} (z-t_i)^{\lambda_+ \lambda_z} (w-t_i)^{\lambda_+ \lambda_w}
    \exp\left( \sum_{k=1}^{r} \frac{-\lambda_k \lambda_+}{k t_i^k} \right)\right] \\
    &\times
    \exp\left( \lambda_{+} \sum_{i=1}^n\sum_{k=1}^{\infty} \frac{a_{-k}}{k}t_i^k \right)
    \exp\left( \lambda_w \sum_{k=1}^{\infty} \frac{a_{-k}}{k}w^k \right)
    \exp\left( \lambda_z \sum_{k=1}^{\infty} \frac{a_{-k}}{k}z^k \right) \\
    &\times
    \exp\left( \sum_{k=1}^{r} \frac{\lambda_k}{k} a_{-k} \right)
    \ket{\lambda_0+\lambda_z + \lambda_w + n\lambda_+}.
\end{align*}
 To compute the coefficients in this free field representation, let $z$ and $w$ be positive real numbers such that $z>w>0$. We assume that the parameters $\lambda_k$ $(k=0,1,\ldots,r,+,z,w)$ are also positive, and define
\begin{equation*}
    \Delta=\{ (t_1,\ldots,t_n)\in\mathbb{R}^n \mid w>t_1>t_2>\cdots>t_n>0 \}.
\end{equation*}
Since $\Lambda'_r=\Lambda_r+\lambda_r(\lambda_w+n\lambda_+)$ in the above realization, the parameter $\beta$ in \eqref{eq limit r 3} is equal to $\lambda_w+n\lambda_+$.

\medskip
\noindent\textbf{Step 2. Change of variables.}
In order to obtain the expansion at $w=0$, we introduce the change of variables
\begin{align*}
    t_i = \frac{w(1 - w^r z s_i)}{1 - w^{r+1}s_i}.
\end{align*}
Then the above expression takes the form
\begin{equation}\label{eq composition r 2}
\begin{split}
     &(-1)^{\frac{n(n-1)}{2}\lambda_+^2 +n} C_{r,r+1}\\
     &\times z^{\frac{n(n-1)}{2}\lambda_+^2 + n \lambda_+ \lambda_w + n}
     \int_{\Delta'} ds\,
     \prod_{1 \leq i<j \leq n} (s_i - s_j)^{\lambda_+ ^2}
     \prod_{i=1}^{n} \left[s_i^{\lambda_+ \lambda_w}
     e^{-\lambda_r \lambda_+ z s_i}
     (1-w^{r+1} s_i)^{-\lambda_+ \lambda_0 -\lambda_+ \lambda_z -\lambda_+ \lambda_w -(n-1)\lambda_+^2-2}\right. \\
    &\times \left.(1 - w^r zs_i)^{\lambda_+ \lambda_0}
    \exp\left( \sum_{k=1}^{r} -\frac{\lambda_k \lambda_+}{k}
    \left(
    \left( \frac{1-w^{r+1}s_i}{w(1 - w^rz s_i)} \right)^k - \left( \frac{1}{w} \right)^k - r zs_i \delta_{k,r}
    \right) \right) \right]\\
    &\times
    \exp\left( \lambda_{+} \sum_{i=1}^n\sum_{k=1}^{\infty} \frac{a_{-k}}{k}\left( \frac{w(1 - w^r z s_i)}{1- w^{r+1} s_i} \right)^k \right)
    \exp\left( \lambda_w \sum_{k=1}^{\infty} \frac{a_{-k}}{k}w^k \right) \\
    &\times
    \exp\left( \lambda_z \sum_{k=1}^{\infty} \frac{a_{-k}}{k}z^k \right)
    \exp\left( \sum_{k=1}^{r} \frac{\lambda_k}{k} a_{-k} \right)
    \ket{\lambda_0+\lambda_z + \lambda_w + n\lambda_+},
\end{split}
\end{equation}
where
\begin{align*}
    C_{r,r+1}
    =
    &\, z^{(\lambda_0 + \lambda_w + n\lambda_+)\lambda_z}
    \exp\left( \sum_{k=1}^{r} \frac{-\lambda_z \lambda_k}{k z^k} \right) \\
    &\times
    w^{\lambda_w \lambda_0 + n\lambda_+ \lambda_0  + (r+1) n\lambda_+ \lambda_w + (r+1)\frac{n(n-1)}{2} \lambda_+^2+ (r+1)n}
    \exp\left( \sum_{k=1}^{r} \frac{-\lambda_k (\lambda_w + n\lambda_+)}{k w^k} \right) \\
    &\times
    \left(1-\frac{w}{z}\right)^{(\lambda_w +n\lambda_+ )\lambda_z +n\lambda_+ \lambda_w +\frac{n(n-1)}{2} \lambda_+^2 + n},
\end{align*}
and
\begin{equation*}
    \Delta'=\{ (s_1,\ldots,s_n)\in\mathbb{R}^n \mid 0<s_1<s_2<\cdots<s_n<1/(w^r z) \}.
\end{equation*}
We note that the exponent of \(1-w/z\) is precisely \(A\) in \eqref{eq limit r 3}.

\medskip
\noindent\textbf{Step 3. Expansion in powers of \(w\).}
We next examine the asymptotic expansion of the integral in \eqref{eq composition r 2}. Let $M$ be a non-negative integer, and consider an integral of the form
\begin{align}\label{eq int rep IVO}
    \int_{\Delta'} ds\, \phi(s) \prod_{i=1}^n \left[ F_i^M(w,s_i) (1 - w^r zs_i)^{c} e^{-X^M(s_i)}\right],
\end{align}
where
\begin{align*}
    \phi(s) &= \prod_{1 \leq i<j \leq n} (s_i - s_j)^{\lambda_+ ^2} \prod_{i=1}^{n} s_i^{a}e^{-\lambda_r \lambda_+z s_i }, \\
    F_i^M(w,s_i)&= (1-w^{r+1} s_i)^{b}e^{-f^M(w,s_i)}, \\
    f^M(w,s_i)&=\sum_{k=1}^{r}f_k^M(w,s_i) -\lambda_r \lambda_+ zs_i, \\
    f_k^M(w,s_i) &= \frac{\lambda_k \lambda_+}{k}
    \left(
    \left( \frac{1-w^{r+1}s_i}{w} \sum_{m=0}^{M}(w^rzs_i)^m\right)^k -\frac{1}{w^k}
    \right), \\
    X^M(s_i)&=\sum_{k=1}^{r}X_k^M(s_i), \\
    X_k^{M}(s_i) &= \frac{\lambda_k \lambda_+}{k}\left( \frac{1-w^{r+1}s_i}{w}\right)^k
    \sum_{l=1}^{k}
      \binom{k}{l}
    \left( \sum_{m=0}^{M}(w^rzs_i)^m \right)^{k-l}
    \left( \frac{(w^rzs_i)^{M+1}}{1-w^rzs_i} \right)^l,
\end{align*}
with $a,c>0$ and $b\in\mathbb{R}$. We note that $f^M(w,s_i)$ is a polynomial in $w$ with vanishing constant term.

By Lemma \ref{lem asym IVO}, the integral \eqref{eq int rep IVO} admits an asymptotic expansion as $w\to 0$. It follows that \eqref{eq composition r 2} can be expanded as
\begin{align*}
    (-1)^{\frac{n(n-1)}{2}\lambda_+^2 +n} C_{r,r+1}
    \sum_{m=0}^{\infty}w^m \sum_{|\nu|\leq m} r_{\nu} ^{(m)}(z)a_{-\nu}
    \exp\left( \lambda_z \sum_{k=1}^{\infty} \frac{a_{-k}}{k}z^k \right)\\
    \times
    \exp\left( \sum_{k=1}^{r} \frac{\lambda_k}{k} a_{-k} \right)
    \ket{\lambda_0+\lambda_z + \lambda_w + n\lambda_+}.
\end{align*}
Each coefficient $r_{\nu} ^{(m)}(z)$ is a linear combination of integrals of the form
\begin{align*}
    z^{\frac{n(n-1)}{2}\lambda_+^2 + n \lambda_+ \lambda_w + n}
    \int_{\Delta^{''}} ds\,
    \prod_{1 \leq i<j \leq n} (s_i - s_j)^{\lambda_+ ^2}
    \prod_{i=1}^{n} s_i^{\lambda_+ \lambda_w+k_i} e^{-\lambda_r \lambda_+ z s_i }
    \qquad (k_i \leq m),
\end{align*}
where
\begin{equation*}
    \Delta^{''}= \{ (s_1,\ldots,s_n)\in\mathbb{R}^n \mid 0<s_1<s_2<\cdots<s_n<\infty \}.
    \end{equation*}
In particular,
\begin{align*}
    r_{\emptyset} ^{(0)}
    =&\,
    z^{\frac{n(n-1)}{2}\lambda_+^2 + n \lambda_+ \lambda_w + n}
    \int_{\Delta^{''}} ds\,
    \prod_{1 \leq i<j \leq n} (s_i - s_j)^{\lambda_+ ^2}
    \prod_{i=1}^{n} s_i^{\lambda_+ \lambda_w} e^{- \lambda_r \lambda_+z s_i }.
\end{align*}
Hence
\begin{align*}
     \ket{R_k(z)}
    =
    \sum_{|\nu|\leq k}\frac{r_{\nu} ^{(k)}}{r_{\emptyset} ^{(0)}}a_{-\nu}
    \exp\left( \lambda_z \sum_{m=1}^{\infty} \frac{a_{-m}}{m}z^m \right)
    \exp\left( \sum_{m=1}^{r} \frac{\lambda_m}{m} a_{-m} \right)
    \ket{\lambda_0+\lambda_z + \lambda_w + n\lambda_+}.
\end{align*}
Since all $r_\nu^{(k)}$ contain the common factor
\[
z^{\frac{n(n-1)}{2}\lambda_+^2 + n \lambda_+ \lambda_w + n},
\]
this factor cancels in the ratio $r_\nu ^{(k)}/r_\emptyset^{(0)}$.

\medskip
\noindent\textbf{Step 4. Passage to the limit.}
It remains to verify that each factor in the above expression admits the limit stated in the theorem.

First, the prefactor in \eqref{eq composition r 2} satisfies
\begin{align*}
    &\epsilon^{-\alpha_z+A}\exp\left(-\sum_{j=1}^{r}\frac{\beta_j^{(z)}}{\epsilon^j} \right)C_{r,r+1} \\
    &=
    w^{c_0 \beta + \frac{r+2}{2}\beta(\beta-2\rho) - (r+2)\Delta(\lambda_w)}
    \exp\left( -\sum_{k=1} ^{r} \frac{\lambda_k\beta}{k w^k} \right)
    \left(1-\frac{z}{w}\right)^A \\
    &\to
    w^{c_0 \beta + \frac{r+2}{2}\beta(\beta-2\rho) - (r+2)\Delta(\lambda_w)}
    \exp\left(-\sum_{k=1}^{r+1}\frac{c_k\beta}{kw^k}\right).
\end{align*}

Next, consider the factor
\begin{align*}
    (1 - w^r z s )^{\lambda_+ \lambda_0}
    \exp\left( - \sum_{k=1} ^{r} \frac{\lambda_k \lambda_+}{k}
    \left( \left( \frac{1-w^{r+1}s}{w(1 - w^r z s)}  \right)^k - \left( \frac{1}{w} \right)^k  \right) \right).
\end{align*}
Its exponent is equal to
\begin{align*}
    &\lambda_+ \lambda_0 \log(1 - w^r z s)
    - \sum_{k=1} ^{r} \frac{\lambda_k \lambda_+}{k w^k}
    \left( \left( \frac{1-w^{r+1}s}{1 - w^r z s}  \right)^k - 1  \right) \\
    =&
    -\lambda_+ \lambda_0 \sum_{N=1} ^{\infty}\frac{z^N s^N}{N}w^{rN}
    -\sum_{N=1}^{\infty} \sum_{k=1} ^{r} \frac{\lambda_k \lambda_+}{k w^k}
    \sum_{i+j=N,\, 0\leq i \leq k} (-1)^i
      \binom{k}{l}
        \binom{k+j-1}{j}
    z^j s^N w^{rN+i} \\
    =&
    -\sum_{N=1} ^{\infty} \sum_{k=0} ^{r} w^{r N - k} s^N
    \sum_{i=0} ^{\min{N,r-k}} \lambda_{k+i} \lambda_{+} (-1)^i
    \frac{(k+N-1)!}{i!k!(N-i)!} z^{N-i}.
\end{align*}
Since $\lambda_{k+i} z^{N-i}$ converges whenever $N \geq r-k+1$, it is enough to consider the case $N\leq r-k$. In this range, the terms with negative powers of $\epsilon$ in $\lambda_{k+i}\epsilon^{N-i}$ coincide with those of $\lambda_{k+N}$:
\begin{align*}
    \lambda_{k+i} \epsilon^{N-i}
    = \lambda_{k+N} -\frac{c_0}{2} \delta_{k+N,0}
    - \sum_{j=0}^{N-i-1} c_{k+N-j}\epsilon^{j}
    + \frac{c_0}{2} \delta_{k+i,0} \epsilon^{N-i}.
\end{align*}
Accordingly, the terms with negative powers of $\epsilon$ cancel, since
\begin{align*}
    \sum_{i=0} ^{N} (-1)^i \frac{(k+N-1)!}{i!k!(N-i)!}=0
\end{align*}
for $N\leq r-k$. Therefore this factor admits the required limit.

Finally, the Fock-space vector converges to the irregular vector of rank $r+1$:
\begin{align*}
    &\exp\left( \lambda_z \sum_{k=1} ^{\infty} \frac{a_{-k}}{k}z^k \right)
    \exp\left( \sum_{k=1} ^{r} \frac{\lambda_k}{k} a_{-k} \right)
    \ket{\lambda_0+\lambda_z + \lambda_w + n\lambda_+} \\
    =&\,
    \exp\left(
    \sum_{k=1} ^{r} \frac{\lambda_k + \lambda_z z^k}{k}a_{-k}
    + \lambda_z\sum_{k=r+1} ^{\infty} \frac{a_{-k}}{k}z^k
    \right)
    \ket{\lambda_0+\lambda_z + \lambda_w + n\lambda_+}.
\end{align*}
By \eqref{eq limit r 1} and \eqref{eq limit r 2}, this converges to the rank $r+1$ irregular vector $\ket{\Gamma'}$ determined by the Fock parameters $(c_0 + \beta,\ldots,c_{r+1})$. Hence each coefficient $ \ket{R_k(z)}$ converges. The proof is complete.
\end{proof}

Using Corollary \ref{cor RCB ICB}, the four-point regular conformal block
\begin{align*}
    z_1^{-\Delta_5 + \Delta_4 + \Delta_3+A}
    \braket{0|\Phi_{0,\Delta_5}^{\Delta_5}(z_2)\Phi_{\Delta_5, \Delta_3}^{\Delta_4}(z_1)\Phi_{\Delta_3,\Delta_1}^{\Delta_2}(w)|\Delta_1}
\end{align*}
degenerates, under the limit of Theorem \ref{Thm RVO to IVO} with $z_1\to 0$, to the three-point irregular conformal block
\begin{align*}
    \braket{0|\Phi_{0,\Delta_5}^{\Delta_5}(z_2)\Phi_{\Lambda', \Lambda}^{\Delta_w}(w)|\Lambda}.
\end{align*}
We note that the irregular vector $\ket{\Lambda'}$ belongs to the completion of the Verma module $M_{\Delta_5}$. Applying Theorem \ref{Thm IVO to IVO} with $r=1$ to this three-point block, we obtain the following corollary.

\begin{corollary}\label{cor limit of ICB 2}
The three-point irregular conformal block
\begin{align*}
    (-1)^{-A} z_2^{2 \Delta_5+A}e^{-\frac{\Lambda_1'}{z_2}}
    \braket{0|\Phi_{0,\Delta_5}^{\Delta_5}(z_2)\Phi_{\Lambda', \Lambda}^{\Delta_w}(w)|\Lambda}
\end{align*}
degenerates, under the limit of Theorem \ref{Thm IVO to IVO}, to the two-point irregular conformal block
\begin{align*}
    \braket{0|\cdot\, \Phi_{\Gamma', \Gamma}^{\Delta_w}(w)|\Gamma}.
\end{align*}
\end{corollary}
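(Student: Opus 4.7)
The plan is to reduce the corollary to a direct application of Theorem~\ref{Thm IVO to IVO} with $r=1$. As noted in the paragraph preceding the statement, the vector $\Phi_{\Lambda',\Lambda}^{\Delta_w}(w)\ket{\Lambda}$ realizes the rank-one irregular vector $\ket{\Lambda'}$ in the completion $\widehat{M}_{\Delta_5}$, so the regular vertex operator $\Phi_{0,\Delta_5}^{\Delta_5}(z_2)$ acts on it in a well-defined manner. Via the free field realization, $\Phi_{0,\Delta_5}^{\Delta_5}(z_2)={:}e^{\lambda_{z_2}\varphi(z_2)}{:}$ takes precisely the form of the outer vertex operator ${:}e^{\lambda_z\varphi(z)}{:}$ used in the proof of Theorem~\ref{Thm IVO to IVO}, so the analysis carried out there applies to our composition, even though $\Phi_{0,\Delta_5}^{\Delta_5}(z_2)$ is formally a regular vertex operator.

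First I would match the parameters between the corollary's prefactor $(-1)^{-A}z_2^{2\Delta_5+A}e^{-\Lambda_1'/z_2}$ and the prefactor $(-1)^{A}z^{-\alpha_z+A}e^{-\beta_1^{(z)}/z}$ of Theorem~\ref{Thm IVO to IVO}. The identifications $\alpha_z=-2\Delta_5$ and $\beta_1^{(z)}=\Lambda_1'$ follow respectively from the leading behavior $\Phi_{0,\Delta_5}^{\Delta_5}(z_2)\ket{\Delta_5}=z_2^{-2\Delta_5}(\ket{0}+\cdots)$ as a regular vertex operator, and from the additional exponential contribution $e^{\Lambda_1'/z_2}$ arising when $\Phi_{0,\Delta_5}^{\Delta_5}(z_2)$ is evaluated on $\ket{\Lambda'}$ rather than on $\ket{\Delta_5}$, which one computes directly via the free field realization of $\ket{\Lambda'}$ as $e^{c_1 a_{-1}}\ket{\lambda_0'}$. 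The sign $(-1)^{\pm A}$ is controlled by the rewriting of the factor $(1-w/z_2)^A$ from the rearranged expansion after the limit $z_2\to 0$.

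Next, I would invoke Theorem~\ref{Thm IVO to IVO} with $r=1$ under the limits~\eqref{eq limit r 2}, \eqref{eq limit r 3}. The proof of that theorem, based on the free field representation together with the asymptotic analysis of the Selberg-type integral, applies verbatim to $\Phi_{0,\Delta_5}^{\Delta_5}(z_2)\Phi_{\Lambda',\Lambda}^{\Delta_w}(w)\ket{\Lambda}$, yielding that the composition multiplied by the prefactor converges to $\Phi_{\Gamma',\Gamma}^{\Delta_w}(w)\ket{\Gamma}$, where $\ket{\Gamma}$ is a rank-two irregular vector. Pairing both sides with $\bra{0}\in V_0^{*}$ using Definition~\ref{def irr pair}(ii) then produces the $2$-point irregular conformal block $\braket{0|\cdot \Phi_{\Gamma',\Gamma}^{\Delta_w}(w)|\Gamma}$ claimed in the statement.

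The main obstacle is to justify that the regular vertex operator $\Phi_{0,\Delta_5}^{\Delta_5}(z_2)$ can legitimately substitute for the rank-one IVO $\Phi_{\Lambda',\tilde{\Lambda}}^{\Delta_z}(z)$ appearing in the hypothesis of Theorem~\ref{Thm IVO to IVO}. Via the free field realization this reduces to checking that ${:}e^{\lambda_{z_2}\varphi(z_2)}{:}$ acts on $\widehat{M}_{\Delta_5}$ in the same way as an IVO acts on an embedded rank-one irregular Verma module, and that the limiting state can be consistently paired with $V_0^{*}$ rather than merely $M_0^{*}$; the latter requires that the null-vector relation $\bra{0}L_{-1}=0$ in $V_0^{*}$ be compatible with the commutation $[L_{-1},\Phi_{\Gamma',\Gamma}^{\Delta_w}(w)]=\partial_w\Phi_{\Gamma',\Gamma}^{\Delta_w}(w)$ and the rank-two structure of $\ket{\Gamma}$, which is built into the free field construction.
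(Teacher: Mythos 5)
Your proposal is correct and follows essentially the same route as the paper: both reduce the statement to Theorem \ref{Thm IVO to IVO} with $r=1$, identify the prefactor through the evaluation of $\Phi_{0,\Delta_5}^{\Delta_5}(z_2)$ on the rank-one irregular vector $\ket{\Lambda'}$ (the paper computes $\braket{0|\Phi_{0,\Delta_5}^{\Delta_5}(z_2)|\Lambda'}=z_2^{-2\Delta_5}e^{\Lambda_1'/z_2}$, which matches your identifications $\alpha_z=-2\Delta_5$, $\beta_1^{(z)}=\Lambda_1'$), and conclude by pairing with $\bra{0}$. The only difference is organizational: the paper first expands in the PBW basis and notes that only the $\nu=\emptyset$ coefficients $b_\emptyset^{(k)}$ survive the pairing before invoking the theorem, whereas you apply the theorem at the level of vectors and pair afterwards.
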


\begin{proof}
Using the rearranged expansion for the composition  $\Phi_{0,\Delta_5}^{\Delta_5}(z_2)\Phi_{\Lambda', \Lambda}^{\Delta_w}(w)\ket{\Lambda}$, we obtain
\begin{align*}
    &(-1)^{-A}z_2^{2 \Delta_5+A}e^{-\frac{\Lambda_1'}{z_2}}
    \braket{0|\Phi_{0,\Delta_5}^{\Delta_5}(z_2)\Phi_{\Lambda', \Lambda}^{\Delta_w}(w)|\Lambda}\\
    =&\,
    (-1)^{-A} z_2^{2 \Delta_5+A}e^{-\frac{\Lambda_1'}{z_2}}
    w^{\alpha} e^{\beta/w}
    \left( 1-\frac{w}{z_2} \right)^A
    \sum_{ k=0}^{\infty}w^k\sum_{\nu \in \mathbb{Y}}b_{\nu}^{(k)}
    \braket{0|L_{-\nu}\Phi_{0, \Delta_5}^{\Delta_5}(z_2)|\Lambda'}\\
    =&\,
    z_2^{2 \Delta_5}e^{-\frac{\Lambda_1'}{z_2}}w^{\alpha+A} e^{\beta/w}
    \left( 1-\frac{z_2}{w} \right)^A
    \sum_{ k=0}^{\infty}b_{\emptyset}^{(k)}
    \braket{0|\Phi_{0, \Delta_5}^{\Delta_5}(z_2)|\Lambda'}w^k,
\end{align*}
where $L_{-\nu}=L_{-\nu_1+2}\cdots L_{-\nu_\ell+2}$ for a partition $\nu=(\nu_1\geq\cdots\geq\nu_\ell)$. Since
\begin{align*}
    \braket{0|\Phi_{0,\Delta_5}^{\Delta_5}(z_2)|\Lambda'}=z_2 ^{-2 \Delta_5 } e^{\frac{\Lambda_1'}{z_2}},
\end{align*}
and Theorem \ref{Thm IVO to IVO} implies that each $b_{\emptyset}^{(k)}$ degenerates to the constant term $\tilde{b}_{\emptyset}^{(k)}$ of $\Phi_{\Gamma', \Gamma}^{\Delta_w}(w)\ket{\Gamma}$, where
\begin{equation*}
    \Phi_{\Gamma', \Gamma}^{\Delta_w}(w)\ket{\Gamma}
    =w^{\tilde{\alpha}}e^{\sum_{i=1}^{r+1}\frac{\tilde{\beta}_i}{w^i}}
    \sum_{k=0}^{\infty} w^k\sum_{\nu\in\mathbb{Y}}\tilde{b}_{\nu}^{(k)}L_{-\nu}\ket{\Gamma'},
\end{equation*}
the assertion follows.
\end{proof}

Similarly, we obtain the following dual version.

\begin{corollary}\label{cor dual limit of ICB 2}
The three-point dual irregular conformal block
\begin{equation*}
    (-1)^A z_2^A e^{-\Lambda_1' z_2}
    \braket{\Lambda|
    \Phi_{\Lambda,\Lambda'}^{\Delta_w,*}(w)
    \Phi_{\Delta_5,0}^{\Delta_5,*}(z_2)
    |0}
\end{equation*}
degenerates to the two-point dual irregular conformal block
\begin{equation*}
    \braket{
    \Gamma|
    \Phi_{\Gamma,\Gamma'}^{\Delta_w,*}(w)
    |0}
\end{equation*}
in the limit \eqref{eq limit r 1}, \eqref{eq limit r 2} with $r=1$, together with
\begin{equation}
    A
    =
    \frac{c_2\beta}{\epsilon^2}
    +\frac{c_1\beta}{\epsilon}
    +\frac{1}{2}\beta(c_0-2\rho-\beta)
    +\Delta_w,
    \quad
    z_2=\frac1\epsilon,
    \quad
    \epsilon\to0.
    \label{eq dual limit r 3}
\end{equation}
Here $\Delta_5=\Delta(\lambda_z)$, $\Lambda$ is parametrized by \eqref{eq Lambda n} with $r=1$ and
$\lambda_0,\lambda_1$, and
\begin{equation*}
    \Lambda_1'=\Lambda_1-\lambda_1\beta,
    \quad
    \Lambda_2'=\Lambda_2.
\end{equation*}
Moreover, $\Gamma$ is parametrized by \eqref{eq Gamma n} with $r=1$, and
\begin{equation*}
    \Gamma_2'=\Gamma_2-c_2\beta,
    \quad
    \Gamma_3'=\Gamma_3,
    \quad
    \Gamma_4'=\Gamma_4.
\end{equation*}
\end{corollary}

\section{Degeneration limit of Painlev\'e $\tau$ functions}\label{sec tau functions}
\subsection{Painlev\'e equations and $\tau$ functions}

  The Painlev\'e equations admit a Hamiltonian formulation
\begin{align*}
    \frac{dq}{dt}=\frac{\partial H}{\partial p}, \qquad
    \frac{dp}{dt}=-\frac{\partial H}{\partial q}.
\end{align*}
For $\mathrm{J}=\mathrm{VI},\mathrm{V},\mathrm{IV}$, the Hamiltonians are given by
\begin{align*}
    t(t-1)H_{\mathrm{VI}}
    =&\, q(q-1)(q-t)p
    \left( p - \frac{2 \theta_0}{q} - \frac{2 \theta_1}{q-1} - \frac{2 \theta_t -1}{q-t} \right)
    \\
    &+ (\theta_0 + \theta_t + \theta_1 + \theta_{\infty})
    (\theta_0 + \theta_t + \theta_1 -\theta_{\infty} -1)q,
    \\
    t H_{\mathrm{V}}
    =&\, (q-1)(pq-2\theta_t)(pq-p+2\theta)-tpq+((\theta+\theta_t)^2-\theta_0^2)q \\
    &+\left(\theta_t-\frac{\theta}{2}\right)t
    -2\left(\theta_t+\frac{\theta}{2}\right)^2,
    \\
    H_{\mathrm{IV}}
    =&\, 2qp^2-(q^2+2tq-\theta_*-\theta_t)p-\theta_t q.
\end{align*}

For a solution $(q(t),p(t))$ of Hamilton's equations, we denote by
\begin{align*}
    H_{\mathrm{J}}(t)=H_{\mathrm{J}}(q(t),p(t);t)
\end{align*}
the corresponding Hamilton function. We then introduce the auxiliary functions
\begin{align*}
    \sigma_{\mathrm{VI}}(t)
    &= t(t-1)H_{\mathrm{VI}}(t)-q(q-1)p
    +(\theta_0+\theta_t+\theta_1+\theta_{\infty})q
    \\
    &\qquad -(\theta_0+\theta_1)^2 t
    +\frac{\theta_1^2+\theta_{\infty}^2-\theta_0^2-\theta_t^2-4\theta_0\theta_t}{2}, \\
    \sigma_{\mathrm{V}}(t)&=tH_{\mathrm{V}}(t), \\
    \sigma_{\mathrm{IV}}(t)&=H_{\mathrm{IV}}(t).
\end{align*}
These functions satisfy the second-order nonlinear differential equations
\begin{align}
&(t-1)^2 t^2 \sigma_{\mathrm{VI}}' (\sigma_{\mathrm{VI}}'')^2 +
    \left(-\sigma_{\mathrm{VI}}'^2
          +2 \left(t \sigma_{\mathrm{VI}}'-\sigma_{\mathrm{VI}}\right) \sigma_{\mathrm{VI}}'(t)
          -\left(\theta_{0}^2-\theta_{1}^2\right)\left(\theta_{t}^2-\theta_{\infty}^2\right)\right)^2
     \label{eq sigma6}\\
  &-\left(\sigma_{\mathrm{VI}}'+(\theta_{0}-\theta_{1})^2\right)
    \left(\sigma_{\mathrm{VI}}'+(\theta_{0}+\theta_{1})^2\right)
    \left(\sigma_{\mathrm{VI}}'+(\theta_{t}-\theta_{\infty})^2\right)
    \left(\sigma_{\mathrm{VI}}'+(\theta_{t}+\theta_{\infty})^2\right)
    = 0, \nonumber\\
&(t\sigma_\mathrm{V}'')^2
      -(\sigma_\mathrm{V}-t\sigma_\mathrm{V}'+2(\sigma_\mathrm{V}')^2)^2
      +\frac{1}{4}\bigl((2\sigma_\mathrm{V}'-\theta)^2-4\theta_0^2\bigr)
       \bigl((2\sigma_\mathrm{V}'+\theta)^2-4\theta_t^2\bigr)
    =0,
    \label{eq sigma5}\\
&\left(\sigma_\mathrm{IV}''\right)^2
      -(t\sigma_\mathrm{IV}'-\sigma_\mathrm{IV})^2
      +4\sigma_\mathrm{IV}'(\sigma_\mathrm{IV}'-\theta_*-\theta_t)
       (\sigma_\mathrm{IV}'-2\theta_t)
    =0.
    \label{eq sigma4}
\end{align}
We denote these equations by  $E_{\mathrm J}(\sigma_{\mathrm J},t)=0$ $(\mathrm{J}=\mathrm{VI},\mathrm{V},\mathrm{IV})$, respectively.

Conversely, if a function $\sigma_{\mathrm{J}}(t)$ $(\mathrm{J}=\mathrm{VI},\mathrm{V},\mathrm{IV})$ satisfies $E_{\mathrm{J}}(\sigma_{\mathrm{J}},t)=0$, then the canonical variables $q(t)$ and $p(t)$ can be recovered as rational functions of $\sigma_{\mathrm{J}}(t)$, $\sigma_{\mathrm{J}}'(t)$, and $\sigma_{\mathrm{J}}''(t)$ \cite{Okamoto,Jimbo}. In particular, the function $q(t)$ satisfies the corresponding Painlev\'e equation.

The $\tau$ functions are defined by
\begin{align}
    \sigma_{\mathrm{VI}} (t) &= t (t-1) \frac{d}{dt}
    \log\left(
    t^{\frac{\theta_0 ^2 + \theta_t^2 - \theta_1^2 - \theta_{\infty}^2}{2}}
    (1-t)^{\frac{\theta_t^2 + \theta_1 ^2 - \theta_0^2 - \theta_{\infty}^2}{2}}
    \tau_{\mathrm{VI}}(t)
    \right),\label{def tau6} \\
    \sigma_{\mathrm{V}}(t) &= t\frac{d}{dt}
    \log\left(
    t^{-\frac{\theta^2}{2}} e^{-\frac{\theta}{2}t}\tau_\mathrm{V}(t)
    \right),\label{def tau5} \\
    \sigma_{\mathrm{IV}}(t) &= \frac{d}{dt}
    \log\left(
    e^{\frac{\theta_*}{2}t^2}\tau_\mathrm{IV}(t)
    \right). \label{def tau4}
\end{align}

\subsection{Expansion of $\tau$ functions at $t=\infty$}

In this subsection, we derive expansions at $t=\infty$ for the $\tau$ functions of $\mathrm{P_V}$ and $\mathrm{P_{IV}}$ by degenerating the $c=1$ conformal-block expansion of the sixth Painlev\'e $\tau$ function.

The $\tau$ function of the sixth Painlev\'e equation admits a Fourier expansion in terms of four-point Virasoro conformal blocks with $c=1$:
\begin{align}
    \tau_{\mathrm{VI}}^{(\infty)} (t)
    &= \sum_{n\in\mathbb{Z}}e^{2\pi i n  \varrho}\,
    C_\mathrm{VI}(\vec{\theta},\sigma+n)
    \braket{\theta_{\infty}^2|
    \Phi_{\theta_{\infty} ^2,\, (\sigma+n)^2} ^{\theta_{t} ^2}(t)
    \Phi_{(\sigma+n)^2,\, \theta_{0}^2}^{\theta_{1}^2}(1)
    |\theta_{0}^2}
    \label{eq tau6 at infty}
\end{align}
at $t=\infty$ \cite{GIL}. Here $ \varrho,\sigma\in\mathbb{C}$, and
\begin{equation*}
	C_\mathrm{VI}(\vec{\theta},\sigma)
    = \frac{\prod_{\epsilon,\epsilon'=\pm}
    G(1+\theta_1+\epsilon\theta_t+\epsilon'\sigma)\,
    G(1+\theta_0+\epsilon\theta_\infty+\epsilon'\sigma)}
    {\prod_{\epsilon=\pm} G(1+2\epsilon\sigma)}.
\end{equation*}

By the Ward identities, the four-point conformal block can be rewritten as
\begin{align*}
    &\braket{\theta_{\infty} ^2|
    \Phi_{\theta_{\infty} ^2,\, (\sigma+n)^2} ^{\theta_{t} ^2}(t)
    \Phi_{(\sigma+n)^2,\, \theta_{0}^2}^{\theta_{1}^2}(1)
    |\theta_{0}^2}
    \\
    &=
    \braket{\theta_{\infty} ^2|
    \Phi_{\theta_{\infty} ^2,\, (\sigma+n)^2} ^{\theta_{t} ^2}\left(\frac{s-z_2}{z_1-z_2}\right)
    \Phi_{(\sigma+n)^2,\, \theta_{0}^2}^{\theta_{1}^2}(1)
    |\theta_{0}^2}
    \\
    &=
    (z_1-z_2)^{\theta_1^2+\theta_0^2+\theta_t^2-\theta_\infty^2}
    \braket{\theta_{\infty} ^2|
    \Phi_{\theta_{\infty} ^2,\, (\sigma+n)^2} ^{\theta_{t} ^2}(s-z_2)
    \Phi_{(\sigma+n)^2,\, \theta_{0}^2}^{\theta_{1}^2}(z_1-z_2)
    |\theta_{0}^2}
    \\
    &=
    (z_1-z_2)^{\theta_1^2+\theta_0^2+\theta_t^2-\theta_\infty^2}
    \braket{\theta_{\infty} ^2|
    \Phi_{\theta_{\infty} ^2,\, (\sigma+n)^2} ^{\theta_{t} ^2}(s)
    \Phi_{(\sigma+n)^2,\, \theta_{0}^2}^{\theta_{1}^2}(z_1)
    \Phi^{\theta_0^2}_{\theta_0^2,0}(z_2)
    |0}.
\end{align*}

We make the corresponding change of variables in the differential equations. For $E_{\mathrm{VI}}(\sigma_{\mathrm{VI}},t)=0$, we set
\begin{equation*}
    t=\frac{s-z_2}{z_1-z_2},
\end{equation*}
whereas for $E_{\mathrm{V}}(\sigma_{\mathrm{V}},t)$, we set
\begin{equation*}
    t=\eta (s-z_2).
\end{equation*}
This yields transformed differential equations $\tilde{E}_{\mathrm{J}}(f(s),s)=0$, where
\[
f(s)=\frac{d}{ds}\log \tau_{\mathrm{J}}(s).
\]

For $E_{\mathrm{IV}}(\sigma_{\mathrm{IV}},t)=0$, we simply put $t=s$. This yields transformed differential equations
\[
    \tilde{E}_{\mathrm{J}}(f(s),s)=0,
    \qquad
    f(s)=\frac{d}{ds}\log \tau_{\mathrm{J}}(s)
    \qquad
    (\mathrm{J}=\mathrm{VI},\mathrm{V},\mathrm{IV}).
\]
We do not write their explicit forms.
These equations are related by the degeneration scheme
\begin{align*}
    \tilde{E}_{\mathrm{VI}}(f(s),s) =0\to
    \tilde{E}_{\mathrm{V}}(f(s),s)=0\to
    \tilde{E}_{\mathrm{IV}}(f(s),s)=0,
\end{align*}
under the dual degeneration limits of conformal blocks stated in Section~3. 
For the degeneration from the four-point dual regular conformal block to the three-point dual irregular conformal block, we use Corollary~\ref{cor dual RCB ICB} with \eqref{eq dual limit rvo to ivo 1}--\eqref{eq dual limit rvo to ivo 3}; for the degeneration from the three-point dual irregular conformal block to the two-point dual irregular conformal block, we use Corollary~\ref{cor dual limit of ICB 2} with \eqref{eq limit r 1}, \eqref{eq limit r 2} specialized to \(r=1\), and \eqref{eq dual limit r 3}. 
In the Painlev\'e parametrization, these limits are written as follows:
\begin{align}
\label{lim 6to5}
&\theta_1 = \frac{\eta}{2 \epsilon} + \frac{\theta}{2}, \qquad
\theta_{\infty} = \frac{\eta}{2 \epsilon} - \frac{\theta}{2}, \qquad
z_1 = \frac{1}{\epsilon}, \qquad \epsilon\to 0,
\\
\label{lim 5to4}
&\theta = -\frac{1}{2\epsilon^2} + \frac{\theta_{*}}{2}, \qquad
\eta = -\frac{1}{\epsilon}, \qquad
\theta_0 = -\frac{1}{2\epsilon^2} -\frac{\theta_{*}}{2}, \qquad
z_2 =\frac{1}{\epsilon}, \qquad \epsilon \to 0.
\end{align}

It follows from \cite{BS} that
\begin{align}
    \tau_{\mathrm{VI}}^{(\infty)} (s,z_1,z_2)
    =&(z_1-z_2)^{\theta_1^2+\theta_0^2+\theta_t^2-\theta_\infty^2}
    \sum_{n\in\mathbb{Z}}e^{2\pi i n  \varrho}
    C_\mathrm{VI}(\vec{\theta},\sigma+n) \nonumber\\
    &\times
    \braket{\theta_{\infty} ^2|
    \Phi_{\theta_{\infty} ^2,\, (\sigma+n)^2} ^{\theta_{t} ^2,*}(s)
    \Phi_{(\sigma+n)^2,\, \theta_{0}^2}^{\theta_{1}^2,*}(z_1)
    \Phi_{\theta_0^2,0}^{\theta_0^2,*}(z_2)
    |0}
    \label{eq tau VI s}
\end{align}
satisfies $\tilde{E}_{\mathrm{VI}}(f(s),s)=0$.

\begin{theorem}\label{thm tau 5 expansion 1}
A series expansion of the $\tau$ function of the fifth Painlev\'e equation at $s=\infty$ is given by
\begin{align}
    \tau_{\mathrm{V}}^{(\infty)} (s,z_2)
    =& \sum_{n \in \mathbb{Z}} e^{2\pi i n  \varrho}\eta^{-2n^2}
    (-1)^{ \frac{1}{2}n(n+1)}
    C_\mathrm{V}(\vec{\theta},\beta+n) \nonumber\\
    &\times
    \braket{ (\eta \theta, \eta^2/4)|
    \Phi_{( \eta \theta, \eta^2/4),\, \left( \eta(\theta - \beta -n) ,\eta^2/4 \right) } ^{\theta_t ^2,*} (s)
    \Phi_{\theta_0^2,0}^{\theta_0^2,*}(z_2)|0 }
    \label{eq tau V s},
\end{align}
where $\vec{\theta}$ stands for $(\theta, \theta_t, \theta_0)$, $\varrho,\, \beta \in \mathbb{C}$, and
\begin{align*}
    C_\mathrm{V}(\vec{\theta},\beta)
    = G(1 + \theta_0 + \theta - \beta ) G(1 - \theta_0 + \theta - \beta )
    G(1+\theta_t + \beta )G(1+\theta_t -  \beta ),
\end{align*}
and the irregular vector $\bra{(\eta(\theta-\beta-n),\eta^2/4)}$ belongs to the completion of the Verma module $M_{\theta_0^2}$. In particular, the function $\tau_{\mathrm{V}}^{(\infty)} (s,z_2)$ satisfies $\tilde{E}_{\mathrm{V}} (f(s),s)=0$.
\end{theorem}

\begin{proof}
We consider the limit of \eqref{eq tau VI s}. By Corollary \ref{cor dual RCB ICB}, the conformal-block part satisfies
\begin{align*}
    &\lim_{\epsilon \to 0}\epsilon^{(\sigma+n)^2 - \theta_1^2 -\theta_0^2 + A(n)}(-1)^{-A(n)}
    \braket{\theta_{\infty}^2|
    \Phi_{\theta_{\infty} ^2,\, (\sigma + n)^2} ^{\theta_{t} ^2,*}(s)
    \Phi_{(\sigma+n)^2,\, \theta_{0}^2}^{\theta_{1}^2,*}(z_1)
    \Phi_{\theta_0^2,\, 0}^{\theta_{0}^2,*}(z_2)
    |0}
    \\
    &=
    \braket{ (\eta \theta, \eta^2/4)|
    \Phi_{( \eta \theta, \eta^2/4),\, \left( \eta(\theta - \beta -n) ,\eta^2/4 \right) } ^{\theta_t ^2,*} (s)
    \Phi_{\theta_0^2,\, 0}^{\theta_{0}^2,*}(z_2)
    |0 },
\end{align*}
where the parameters are scaled as in \eqref{lim 6to5} and
\begin{align*}
    \sigma = \frac{\eta}{2 \epsilon} - \frac{\theta}{2} + \beta, \qquad
    A(n) = - \frac{\eta(\beta+n)}{\epsilon} - (\beta+n)(\theta-\beta-n) - \theta_t^2.
\end{align*}
It therefore remains to analyze the factor
\begin{align*}
    e^{2\pi i n  \varrho} C_{\mathrm{VI}}(\vec{\theta},\sigma+n)
    \epsilon^{-(\sigma+n)^2 + \theta_1^2 +\theta_0^2 - A(n)}(-1)^{A(n)}.
\end{align*}
Using the relations

\begin{align}
    G(1 + x + n) &= (1 + x + n-2) (1 + x + n -3)^2 \cdots (1+x)^{n-1}x^n \Gamma^n(x) G(1+x), \label{eq Barnes relation 1}\\
    G(1+x-n) &= (1+x-n)(1+x-n+1)^2 \cdots (1+x-2)^{n-1} \Gamma^{-n}(x) G(1+x), \label{eq Barnes relation 2}
\end{align}
we obtain
\begin{align*}
    &e^{2\pi i n  \varrho}
    \frac{C_{\mathrm{VI}}(\vec{\theta},\sigma+n)
    \epsilon^{-(\sigma+n)^2 + \theta_1^2 +\theta_0^2 - A(n)}(-1)^{A(n)}}
    {C_{\mathrm{VI}}(\vec{\theta},\sigma)
    \epsilon^{-\sigma^2 + \theta_1^2 +\theta_0^2 - A(0)}(-1)^{A(0)}}\\
    =&\,
    e^{2 \pi i n\varrho'} (-1)^{-\frac{n(n-1)}{2}}\eta^{-2n^2}
    \frac{\prod_{k=\pm 1}G(1 +k \theta_0 + \theta - \beta -n) \prod_{\ell=\pm 1}G(1+ \theta_t +\ell (\beta + n))}
    {\prod_{k=\pm 1}G(1 +k \theta_0 + \theta - \beta) \prod_{\ell=\pm 1}G(1+\theta_t+\ell \beta)},
\end{align*}
where
\begin{align*}
    e^{2 \pi i \varrho'}
    =&\,
    e^{2 \pi i \varrho}
    \frac{\Gamma(1+ \theta_0 + \eta/\epsilon + \beta )\Gamma(1- \theta_0 + \eta/\epsilon + \beta )
    \Gamma(1+\theta_t + (\eta/\epsilon - \theta + \beta))
    \Gamma^2 (1-\eta/\epsilon + \theta - 2\beta)}
    {\Gamma(1+\theta_t - (\eta/\epsilon - \theta + \beta))
    \Gamma^2 (1+\eta/\epsilon - \theta + 2 \beta)}\\
    &\times
    \eta \epsilon^{2(\theta - 2 \beta) -1} (-1)^{-\eta/\epsilon- \theta +2\beta}.
\end{align*}
Thus, after multiplication by a suitable scalar, the function $\tau_{\mathrm{VI}}^{(\infty)} (s,z_1,z_2)$ degenerates to $\tau_{\mathrm{V}}^{(\infty)} (s,z_2)$ in the limit
\begin{equation*}
    \theta_1 = \frac{\eta}{2 \epsilon} + \frac{\theta}{2}, \qquad
    \theta_{\infty} = \frac{\eta}{2 \epsilon} - \frac{\theta}{2}, \qquad
    \sigma = \frac{\eta}{2 \epsilon} - \frac{\theta}{2} + \beta, \qquad
    z_1 = \frac{1}{\epsilon}, \qquad \epsilon\to 0.
\end{equation*}
\end{proof}

A similar degeneration can be obtained by replacing the independent variable by $\epsilon t$ and taking the limit $\epsilon\to 0$ with the parameter scaling \eqref{lim 6to5}. In this way, the $\tau$ function $\tau_{\mathrm{VI}}^{(\infty)}(\epsilon t)$ degenerates to a $\tau$ function of the fifth Painlev\'e equation. We also note that, by taking the limit $z_2\to 0$ with $\eta=1$ in Theorem \ref{thm tau 5 expansion 1}, one recovers the following expansion. Indeed, by the definition of the regular vertex operator,
\[
\lim_{z_2 \to 0}\Phi_{\theta_0^2,0}^{\theta_0^2}(z_2)\ket{0} = \ket{\theta_0^2}.
\]

\begin{theorem}[Conjecture 4.1 in \cite{Nagoya 2015}]\label{thm tau 5 expansion 2}
A series expansion of the $\tau$ function of the fifth Painlev\'e equation at $t=\infty$ is given by
\begin{align*}
    \tau_{\mathrm{V}}^{(\infty)} (t)
    =& \sum_{n \in \mathbb{Z}} e^{2\pi i n  \varrho}
    (-1)^{ \frac{1}{2}n(n+1)}
    C_\mathrm{V}(\vec{\theta},\beta+n)
    \braket{ (\theta, 1/4)|
    \Phi_{( \theta, 1/4),\, \left( \theta - \beta -n ,1/4 \right) } ^{\theta_t ^2,*} (t)
    |\theta_0^2 },
\end{align*}
where $\varrho,\, \beta \in \mathbb{C}$, and
\begin{align*}
    C_\mathrm{V}(\vec{\theta},\beta)
    = G(1 + \theta_0 + \theta - \beta )G(1 -\theta_0 + \theta - \beta )
    G(1+\theta_t +\beta )G(1+\theta_t -\beta ).
\end{align*}
In particular, the function $\tau_{\mathrm{V}}^{(\infty)} (t)$ satisfies $E_{\mathrm{V}} (\sigma_\mathrm{V}(t),t)=0$.
\end{theorem}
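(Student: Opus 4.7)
The plan is to obtain this expansion as a direct limit of the rescaled sixth Painlevé $\tau$ function $\tau_{\mathrm{VI}}^{(\infty)}(\epsilon t)$, in the same spirit as the previous theorem but without introducing the auxiliary insertion point $z_2$. This is the situation indicated in the paragraph preceding the statement, and it amounts to the $z_2 = 0$ specialization of the previous theorem: the operator $\Phi^{\theta_0^2}_{\theta_0^2, 0}(z_2)\ket{0} = \ket{\theta_0^2} + O(z_2)$ is regular at $z_2 = 0$, and the change of variables $t = \eta(s - z_2)$ defining $\tilde{E}_{\mathrm{V}}(f(s),s)$ reduces there to $t = \eta s$, yielding exactly $E_{\mathrm{V}}(\sigma_{\mathrm{V}}(\eta s), \eta s)$ after relabeling $s \to t$.

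To execute the argument, I would begin from \eqref{eq tau6 at infty} with $t$ replaced by $\epsilon t$, impose the parameter constraints \eqref{lim 6to5} together with $\sigma = \eta/(2\epsilon) - \theta/2 + \beta$, and apply the conformal-block degeneration established in the proof of the previous theorem. Concretely, multiplying by the universal prefactor $(-1)^{-A(n)} \epsilon^{(\sigma+n)^2 - \theta_1^2 - \theta_0^2 + A(n)}$ with
\begin{align*}
A(n) = -\frac{\eta(\beta+n)}{\epsilon} - (\beta+n)(\theta - \beta - n) - \theta_t^2,
\end{align*}
the conformal block $\braket{\theta_\infty^2 | \Phi^{\theta_t^2}(\epsilon t) \Phi^{\theta_1^2}(1) | \theta_0^2}$ converges to $\braket{(\eta\theta, \eta^2/4) | \Phi^{\theta_t^2}_{(\eta\theta, \eta^2/4),\, (\eta(\theta - \beta - n), \eta^2/4)}(t) | \theta_0^2}$. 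This is precisely the $z_2 = 0$ case of the limit identity used in the proof of the previous theorem, so no new degeneration mechanism is required.

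The remaining $n$-dependent prefactor, involving $C_{\mathrm{VI}}(\vec{\theta}, \sigma+n)$ together with a diverging power of $\epsilon$, is then analyzed using the same Barnes $G$-function shift identities combined with Stirling asymptotics as before. The outcome is that $e^{2\pi i n \rho} C_{\mathrm{VI}}(\vec{\theta}, \sigma+n)$ is replaced in the limit by $e^{2\pi i n \rho'} \eta^{-2n^2} (-1)^{n(n+1)/2} C_{\mathrm{V}}(\vec{\theta}, \beta+n)$, up to an $n$-independent scalar and a harmless shift $\rho \to \rho'$ which may be absorbed into the Fourier-type summation. Dividing by the overall constant then produces the expansion claimed in the theorem.

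Finally, for the differential equation: since $\tau_{\mathrm{VI}}^{(\infty)}(\epsilon t)$, viewed as a function of $t$, satisfies $E_{\mathrm{VI}}(\sigma_{\mathrm{VI}}(\epsilon t), \epsilon t)$ by the chain rule, and since this equation degenerates to $E_{\mathrm{V}}(\sigma_{\mathrm{V}}(\eta t), \eta t)$ under \eqref{lim 6to5} (the $z_2 = 0$ specialization of the already verified scheme $\tilde{E}_{\mathrm{VI}} \to \tilde{E}_{\mathrm{V}}$), the limit function $\tau_{\mathrm{V}}^{(\infty)}(t)$ must satisfy the required equation. The main obstacle is the asymptotic bookkeeping of the $G$- and $\Gamma$-factors, but this computation is entirely parallel to the previous theorem and introduces no essentially new difficulty.
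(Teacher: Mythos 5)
Your proposal matches the paper's intended argument: the paper offers no separate proof of this statement beyond the remark that ``in a similar way, $\tau_{\mathrm{VI}}^{(\infty)}(\epsilon t)$ degenerates,'' i.e.\ one reruns the preceding proof with the degeneration of Corollary \ref{cor RCB ICB} applied to the block $\braket{\theta_\infty^2|\Phi^{\theta_t^2}(s)\Phi^{\theta_1^2}(z_1)|\theta_0^2}$ acting directly on $\ket{\theta_0^2}$, together with the same Barnes $G$ bookkeeping. Your framing of this as the $z_2=0$ specialization (using $\Phi^{\theta_0^2}_{\theta_0^2,0}(z_2)\ket{0}=\ket{\theta_0^2}+O(z_2)$ and $t=\eta(s-z_2)\to \eta s$) is an inessential repackaging of the same computation, and is correct.
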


This establishes Conjecture 4.1 in \cite{Nagoya 2015} for the expansion of the fifth Painlev\'e $\tau$ function in terms of irregular conformal blocks of type $(0,0,1)$.

\begin{theorem}[Conjecture 4.2 in \cite{Nagoya 2015}] \label{thm tau 4 expansion}
A series expansion of the $\tau$ function of the fourth Painlev\'e equation at $t=\infty$ is given by
\begin{align*}
    \tau_{\mathrm{IV}}^{(\infty)} (t)
    =& \sum_{n \in \mathbb{Z}} e^{2\pi i n  \varrho}
    C_\mathrm{IV}(\vec{\theta},\beta+n)
    \braket{(\theta_*,0,1/4)|
    \Phi_{(\theta_*,0,1/4),(\theta_* - \beta - n,0,1/4)}^{\theta_t ^2,*}(t)|0},
\end{align*}
where $\vec{\theta}$ stands for $(\theta_*, \theta_t)$, $\varrho,\, \beta \in \mathbb{C}$, and
\begin{equation*}
    C_\mathrm{IV}(\vec{\theta},\beta)
    =G(1 + \theta_{*}  - \beta)
    G(1+\theta_t + \beta )G(1+\theta_t -\beta ).
\end{equation*}
In particular, the function $\tau_{\mathrm{IV}}^{(\infty)} (t)$ satisfies $\tilde{E}_{\mathrm{IV}} (f(t),t)=0$.
\end{theorem}

\begin{proof}
We consider the limit of $\tau_{\mathrm{V}}^{(\infty)}(s,z_2)$ in \eqref{eq tau V s}. By Corollary \ref{cor dual limit of ICB 2}, the conformal-block part satisfies
\begin{align*}
    &\lim_{\epsilon \to 0}
    (-1)^{A(n)}
    \epsilon^{-A(n)}
    \exp\left(
        -\frac{\eta(\theta-\beta-n)}{\epsilon}
    \right)
    \braket{ (\eta \theta, \eta^2/4)|
    \Phi_{( \eta \theta, \eta^2/4),(\eta(\theta - \beta -n) ,\eta^2/4 ) } ^{\theta_t ^2,*} (s)
    \Phi_{\theta_0^2,0}^{\theta_0^2,*}(z_2)|0 }\\
    &=\braket{(\theta_*,0,1/4)|
    \Phi_{(\theta_*,0,1/4),(\theta_* - \beta - n,0,1/4)}^{\theta_t ^2,*}(t) |0},
\end{align*}
where the parameters are scaled as in \eqref{lim 5to4} and
\begin{align*}
    A(n)=
    \frac{\beta+n}{\epsilon^2}
    +(\beta+n)\theta_*
    -(\beta+n)^2
    +\theta_t^2.
\end{align*}
It therefore suffices to analyze the remaining factor
\begin{equation*}
    e^{2\pi i n\varrho}
    \eta^{-2n^2}
    (-1)^{\frac{1}{2}n(n+1)}
    C_{\mathrm{V}}(\vec{\theta},\beta+n)
    (-1)^{-A(n)}
    \epsilon^{A(n)}
    \exp\left(
        \frac{\eta(\theta-\beta-n)}{\epsilon}
    \right).
\end{equation*}
Put
\begin{align*}
    e^{2\pi i\varrho'}
    =
    e^{2\pi i\varrho}
    \Gamma(-1/\epsilon^2-\beta)^{-1}
    \epsilon^{1/\epsilon^2+\theta_*-2\beta+1}
    e^{1/\epsilon^2}
    (-1)^{-1/\epsilon^2-\theta_*+2\beta}.
\end{align*}
Then, using  \eqref{eq Barnes relation 1} and \eqref{eq Barnes relation 2}, we obtain
\begin{align*}
   &e^{2\pi i n \varrho}
   \eta^{-2n^2}
   (-1)^{\frac{1}{2}n(n+1)}
   \frac{
   C_{\mathrm{V}}(\vec{\theta},\beta+n)
   (-1)^{-A(n)}
   \epsilon^{A(n)}
   \exp\left(
        \frac{\eta(\theta-\beta-n)}{\epsilon}
   \right)}
   {
   C_{\mathrm{V}}(\vec{\theta},\beta)
   (-1)^{-A(0)}
   \epsilon^{A(0)}
   \exp\left(
        \frac{\eta(\theta-\beta)}{\epsilon}
   \right)}
   \\
   =&
   e^{2\pi i n \varrho'}
   \frac{C_\mathrm{IV}(\vec{\theta},\beta+n)}
   {C_\mathrm{IV}(\vec{\theta},\beta)}.
\end{align*}
Thus, after multiplication by a suitable scalar, the function $\tau_{\mathrm{V}}^{(\infty)} (s,z_2)$ degenerates to $\tau_{\mathrm{IV}}^{(\infty)} (t)$ in the limit
\begin{equation*}
s=t,\quad
    \theta = -\frac{1}{2\epsilon^2} + \frac{\theta_{*}}{2}, \qquad
    \eta = -\frac{1}{\epsilon}, \qquad
    \theta_0 = -\frac{1}{2\epsilon^2} -\frac{\theta_{*}}{2}, \qquad
    z_2 =\frac{1}{\epsilon}, \qquad \epsilon \to 0.
\end{equation*}
\end{proof}

\begin{remark}
By differentiating the second-order differential equations $E_\mathrm{J}(\sigma_\mathrm{J},t)=0$ satisfied by the Hamiltonian functions of the Painlev\'e equations, one obtains bilinear equations for the corresponding $\tau$ functions. For $\mathrm{P_{VI}}$, this bilinear equation corresponds to a bilinear relation for Virasoro conformal blocks with $c=1$ \cite{BS}. It is then natural to ask for an analogue of this bilinear relation for Virasoro conformal blocks with general $c$. In \cite{BST}, such analogues were introduced as bilinear equations for quantum Painlev\'e $\tau$ functions. Since our degeneration scheme is valid for general $c$, the corresponding quantum Painlev\'e $\tau$ functions for $\mathrm{P_V}$ and $\mathrm{P_{IV}}$ can be obtained in the same manner. We plan to return to this point elsewhere.
\end{remark}

\appendix
\section{Appendix}

\begin{lemma}\label{lem asym IVO}
Let $M$ be a non-negative integer. The integral
\begin{equation*}
    \int_{\Delta'} ds\, \phi(s)\prod_{i=1}^n \left[ F_i^M(w,s_i)\,(1-w^r z s_i)^c e^{-X^M(s_i)} \right]
\end{equation*}
admits an asymptotic expansion as $w\to 0$. Its principal part is obtained by term-by-term integration of the Taylor expansion in $w$ over the domain
\begin{equation*}
    \Delta^{(1)}(\gamma_{0,\infty})
    =
    \left\{
    (s_1,\ldots,s_n)\,\middle|\,
    s_n\in\gamma_{0,\infty},\;
    s_k\in\gamma_{0,\infty}(0,s_{k+1})\ (1\le k\le n-1)
    \right\},
\end{equation*}
where $\gamma_{p,q}:[0,1]\to\mathbb{P}^1(\mathbb{C})$ is a smooth path such that
$\gamma_{p,q}(0)=p$ and $\gamma_{p,q}(1)=q$, avoiding the singularities of the integrand; moreover, the path $\gamma_{0,\infty}$ is chosen so that the exponential factor in $\phi(s)$ decays exponentially as $s\to\infty$ along $\gamma_{0,\infty}$. A path $\gamma_{p,q}(p,s)$ is the sub-path of $\gamma_{p,q}$ from $p$ to $s$. More precisely,
\begin{equation}\label{eq appendix asymptotic}
    \int_{\Delta'} ds\, \phi(s)\prod_{i=1}^n \left[ F_i^M(w,s_i)\,(1-w^r z s_i)^c e^{-X^M(s_i)} \right]
    =
    \int_{\Delta^{(1)}(\gamma_{0,\infty})} ds\, \phi(s)\,P_{rM-1}(s,w)
    + O(w^{rM}),
\end{equation}
where $P_{rM-1}(s,w)$ denotes the Taylor polynomial of degree $rM-1$ in $w$ of
\begin{equation*}
    \prod_{i=1}^n \left[F_i^M(w,s_i)\,(1-w^r z s_i)^c \right].
\end{equation*}
\end{lemma}

\begin{proof}
We first remove the factor $e^{-X^M(s_i)}$. Writing
\begin{align*}
    &\int_{\Delta'} ds\, \phi(s)\prod_{i=1}^n \left[F_i^M(w,s_i)(1-w^r z s_i)^c e^{-X^M(s_i)}\right] \\
    =&
    \int_{\Delta'} ds\, \phi(s)\prod_{i=1}^n \left[ F_i^M(w,s_i)(1-w^r z s_i)^c \right] \\
    &+
    \int_{\Delta'} ds\, \phi(s)\prod_{i=1}^n \left[ F_i^M(w,s_i)(1-w^r z s_i)^c\right] \left(\prod_{i=1}^{n}e^{-X^M(s_i)}-1\right),
\end{align*}
we claim that the second term is $O(w^{rM})$. Indeed, if
\[
0<s_1<\cdots < s_i < \frac{1}{2w^r z} < s_{i+1}<\cdots <s_n <\frac{1}{w^r z},
\]
then for $j\le i$ one has $|w^r z s_j|\le 1/2$, hence
\[
|1-e^{-X^M(s_j)}|
\le |X^M(s_j)|
\le K_1 (z s_j)^{M+1} w^{rM},
\]
whereas for $j>i$ one has $|2w^r z s_j|>1$, and therefore
\[
|1-e^{-X^M(s_j)}|
\le 1
< K_1'(z s_j)^M w^{rM}
\]
for some constant $K_1,K_1'>0$.
Thus, the contribution of the second term is $O(w^{rM})$, and it is enough to analyze
\begin{equation}\label{eq appendix reduced integral}
    \int_{\Delta'} ds\, \phi(s)\prod_{i=1}^n \left[ F_i^M(w,s_i)(1-w^r z s_i)^c \right].
\end{equation}

Next, since each $F_i^M(w,s_i)$ is analytic in $w$ on the range of integration, Taylor's theorem gives
\[
F_i^M(w,s_i)=\sum_{m=0}^{rM-1} a_m(s_i)w^m + R_{rM}^{(1)}(w,s_i),
\qquad
|R_{rM}^{(1)}(w,s_i)|\le K_2 |w|^{rM}
\]
for some constant $K_2>0$. It follows that the contribution of the remainder is again $O(w^{rM})$. Consequently, it remains to study
\[
\int_{\Delta'} ds\, \phi(s)\prod_{i=1}^n (1-w^r z s_i)^c.
\]

We now deform the integration domain. Since the integrand is holomorphic in each variable away from $0$, $\infty$, $1/(w^r z)$, and the diagonals $s_i=s_j$, Cauchy's theorem yields
\begin{align*}
    \int_{\Delta'} ds\,\phi(s)\prod_{i=1}^n(1-w^r z s_i)^c
    =&\,
    \int_{\Delta^{(1)}(\gamma_{0,\infty})} ds\,\phi(s)\prod_{i=1}^n(1-w^r z s_i)^c \\
    &-
    \int_{\Delta^{(2)}(\gamma_{1/(w^r z),\infty})} ds\,\phi(s)\prod_{i=1}^n(1-w^r z s_i)^c,
\end{align*}
where
\begin{align*}
    \Delta^{(2)}(\gamma_{p,q})
    :=
    \left\{
    (s_1,\ldots,s_n)\,\middle|\,
    s_n\in\gamma_{p,q},\;
    s_k\in\gamma_{0,s_n}(0,s_{k+1})\ (1\le k\le n-1)
    \right\},
\end{align*}
and $\gamma_{1/(w^r z),\infty}$ is chosen so that the exponential factor in $\phi(s)$ also decays exponentially as $s\to\infty$ along it.

The second term is exponentially small:
\[
\int_{\Delta^{(2)}(\gamma_{1/(w^r z),\infty})} ds\,\phi(s)\prod_{i=1}^n(1-w^r z s_i)^c
=
e^{-\lambda_r\lambda_+/w^r}O(1).
\]
This follows from the change of variables \(s_i=u_i+1/(w^r z)\) together with the standard asymptotics of Kummer's confluent hypergeometric function.

It therefore remains to consider
\begin{align*}
    \int_{\Delta^{(1)}(\gamma_{0,\infty})}ds\,\phi(s)\prod_{j=1}^n(1-w^r z s_j)^c.
\end{align*}
We decompose the domain $\Delta^{(1)}(\gamma_{0,\infty})$ according to the index $i\in\{0,\dots,n\}$ for which
\[
|s_1|<\cdots<|s_i|\le \frac{1}{2|w|^r|z|}
\qquad\text{and}\qquad
\frac{1}{2|w|^r|z|}<|s_{i+1}|<\cdots<|s_n|,
\]
with the obvious convention when $i=0$ or $i=n$. For $j\le i$, one has $|w^r z s_j|\le 1/2$, and hence the Taylor expansion
\[
(1-w^r z s_j)^c=\sum_{m\geq 0} b_m (w^r z s_j)^m
\]
converges uniformly. For $j>i$, the contribution of the remainder is exponentially small, owing to the exponential decay of $\phi(s)$ along $\gamma_{0,\infty}$.

It follows that, after summing over all possible values of $i$, the integral over $\Delta^{(1)}(\gamma_{0,\infty})$ admits an asymptotic expansion obtained by term-by-term integration of the Taylor expansion of
\[
\prod_{j=1}^n(1-w^r z s_j)^c.
\]
This yields the contribution
\[
\int_{\Delta^{(1)}(\gamma_{0,\infty})} ds\, \phi(s)\,P_{rM-1}(s,w)
\]
up to an error of order $O(w^{rM})$. Together with the previous steps, this proves \eqref{eq appendix asymptotic}.

\end{proof}

\end{document}